\def\ps@pprintTitle{%
 \let\@oddhead\@empty
 \let\@evenhead\@empty
 \let\@oddfoot\@empty
 \let\@evenfoot\@empty
}
\newlength{\footnotemargin}
\newtheorem{theorem}{Theorem}[section]
\newenvironment{extrainfo}
  {\global\setbox\extrainfobox=\vbox\bgroup\parindent=0pt }
  {\egroup}
\newsavebox\extrainfobox
\def\@citess#1{\textsuperscript{[#1]}}
\def\@cite#1#2{\textsuperscript{[#1\if@tempswa , #2\fi]}}
\numberwithin{equation}{section}
\newcommand{\beginsupplement}{%
        \setcounter{table}{0}
        \renewcommand{\thetable}{S\arabic{table}}%
        \setcounter{figure}{0}
        \renewcommand{\thefigure}{S\arabic{figure}}%
     }
\numberwithin{equation}{section}
\begin{document}
\begin{spacing}{1.9}

\begin{frontmatter}

\title{Fertilizers Fuel, Insecticides Stabilize: Resolving the Paradox of Enrichment in Agriculture}

\author{Vaibhava Srivastava$^{1,\dagger,*}$\fnref{fn1},~Jason R. Rohr$^{2,\dagger}$\fnref{}, and ~Rana D. Parshad$^3$\fnref{}}

\address{\textit{ 1) Department of Mathematics and Statistics, University of Massachusetts Amherst, \\ Amherst, MA 01003, USA.} \\
\textit{ 2) Department of Biological Sciences, Environmental Change Initiative, Eck Institute of Global Health,\\ University of Notre Dame, Notre Dame, IN 46556, USA. \\
3) Department of Mathematics, Iowa State University, \\ Ames, IA 50011, USA.}}

\fntext[fn1]{Email: Vaibhava Srivastava (vsrivastava@umass.edu), Jason R. Rohr (jrohr2@nd.edu and jasonrohr@gmail.com) and Rana D. Parshad (rparshad@iastate.edu)}

\cortext[cor1]{Corresponding author; \quad  $^\dagger$\text{These authors contributed equally}}

\begin{extrainfo}
\begin{spacing}{1.9}
\begin{abstract}
    The Paradox of Enrichment (PoE) predicts that increasing resources, such as nutrient inputs like fertilizers or food availability, should destabilize ecological systems, such as crop–pest dynamics, leading to population cycles that can increase the risk of crop failure during environmental shocks. Yet, since the Green Revolution, fertilizer use has surged without widespread evidence of yield instability, challenging the PoE's relevance to modern agriculture. Here, we propose and test a novel resolution: that insecticides, frequently co-applied with fertilizers, act as stabilizing agents that counterbalance enrichment-induced instability. Using a modified PoE model with empirically grounded parameters for three major crop–pest systems—soybean–aphid, wheat–aphid, and cabbage–diamondback moth—we find that fertilizer increases yields, but destabilizes dynamics, whereas insecticides restore stability and ensure more predictable harvests. These findings reveal that insecticides may suppress pests but also play a critical role in stabilizing crop yields in nutrient-enriched agroecosystems, with implications for ecosystem management, eutrophication, conservation biology, and pesticide policy.
    
\noindent \textbf{Keywords:} Paradox of Enrichment; Agroecology; Global Food Production; Crop–Pest System; Insecticides; Fertilizers; Agriculture;

\end{abstract}
\end{spacing}

\end{extrainfo}

\end{frontmatter}

\section*{Main}


\noindent  The Paradox of Enrichment (PoE)\cite{rosenzweig1971paradox} is widely regarded as a seminal and foundational concept in ecology, particularly in the subfields of population dynamics, predator-prey interactions, and ecosystem stability\citep{mccauley1990predator,gilpin1972enriched,abrams1996invulnerable,luckinbill1974effects,gounand2014paradox}. First introduced by Michael Rosenzweig in 1971\cite{rosenzweig1971paradox}, the PoE challenged prevailing assumptions by showing that increasing resources, such as nutrient inputs or food availability, could destabilize rather than stabilize ecological systems. The concept of resource additions shifting consumer--resource dynamics from temporally stable to destable has been reinforced in several follow-up theoretical, experimental, and field studies \citep{luckinbill1973coexistence,bukovinszky2008direct,fussmann2000crossing,myers1981plant,brunsting1985role}. Since its inception, the PoE has catalyzed ecological interests in nonlinear dynamics and feedback loops, facilitated advances in the integration of theoretical modeling with empirical ecology, and laid the groundwork for later work on resilience, tipping points, and regime shifts in ecosystems, and since has often been cited in discussions of ecosystem management, eutrophication, and conservation biology\citep{may2001stability,luckinbill1973coexistence,bukovinszky2008direct,fussmann2000crossing,myers1981plant,brunsting1985role,roy2007stability}.

The societal consequences of the PoE are most obvious when considered through the lens of agroecosystems containing stable interactions between crops and herbivores. Modern agriculture has long relied on fertilizers and insecticides to enhance crop yields and suppress pests \citep{mueller2012closing,erisman2008century,stewart2005contribution,science_infogrpahics}. While their use is foundational to global food production, the broader ecological consequences of these inputs and the theoretical frameworks that predict their effects are not always fully integrated into agricultural models or policy discussions. In agroecosystems, Rosenzweig’s (1971) model\cite{rosenzweig1971paradox} demonstrates that adding fertilizer to crop systems shifts crop-herbivore dynamics from stable to cyclic, with the amplitude of the cycle increasing with the level of enrichment \citep{bukovinszky2008direct}. This destabilization becomes particularly problematic when coupled with inevitable but stochastic events.  For example, if a drought, flood, or disease outbreak happens to coincide with the trough in these crop cycles, then it could lead to zero crop yields (i.e., extinction). Hence, according to Rosenzweig’s validated model \citep{rosenzweig1971paradox}, fertilizers, which are supposed to increase crop yields, can paradoxically and terrifyingly reduce food production and global food security.

Despite the rapid rise in synthetic fertilizer use since the Green Revolution in the mid-$20^{\text{th}}$ century \citep{evenson2003assessing,pingali2012green}, there is little evidence that nutrient enrichment has destabilized food production or threatened food security. On the contrary, fertilizers are credited with contributing 30\% to 50\% of the increase in global crop yields during this period \citep{smil2004enriching}. This observed reality stands in stark contrast to the predictions of the PoE; as a result, the Paradox has long remained a topic of considerable debate and scrutiny \citep{duce2008impacts,murdoch1998plankton}. Some researchers have sought to identify exceptions to the PoE. For example, an empirical study\citep{murdoch1998plankton} found stable population dynamics in enriched plankton systems, whereas broader reviews report limited evidence of instability despite widespread nitrogen enrichment \citep{duce2008impacts,elser2009nutrient}. Other researchers have attempted to identify overlooked mechanisms that might explain why enrichment has not led to widespread instability in natural or agricultural ecosystems \citep{roy2007stability}. Several mechanisms have been proposed, including traits that reduce consumer access to enriched resources, such as low food quality \citep{urabe1996regulation}, unpalatability \citep{genkai1999unpalatable}, switching behavior \citep{van2001alternative}, and inducible defenses \citep{verschoor2004inducible}. Other mechanisms increase nutrient output or loss from consumers, including cannibalism, parasitism, interference, and trophic complexity. Together, these ecological processes might buffer ecosystems against the destabilizing effects predicted by the PoE. However, most of these mechanisms are unlikely to be prevalent in monoculture crop systems, making them improbable explanations for why the PoE has not undermined global food security. Hence, why the PoE has not materialized in agricultural systems remains a fundamental and outstanding question.

One simple explanation that has surprisingly not yet been proposed or tested is that insecticides often applied alongside fertilizers \citep{smith1995crop} may function as de-riching agents that counteract the destabilizing effects of fertilizers. In this view, fertilizers primarily serve to boost mean crop yields, whereas insecticides may play a complementary role by stabilizing yields, reducing variance, and improving predictability for both farmers and consumers. If this hypothesis is correct, it suggests that the long-standing focus on evaluating insecticides solely based on their impact on mean yields may be misdirected. Their true agronomic value might not only lie in increasing yields, but also in enhancing yield stability over time.

To test this hypothesis, we extended the original PoE coupled ordinary differential equation model by incorporating an insecticide term, allowing us to evaluate whether insecticides could theoretically re–stabilize crop–herbivore dynamics. To enhance real-world applicability, we then parameterized the model for three well-studied and economically important crop–pest systems: (1) Soybean (\textit{Glycine max (L.) Merr}.) and soybean aphid (\textit{Aphis glycines}) \citep{ragsdale2007economic,tilmon2011biology}; (2) Winter wheat (\textit{Triticum aestivum L.}) and bird cherry–oat aphid (\textit{Rhopalosiphum padi L.}) 
\citep{riedell1999winter}; (3) Cabbage and diamondback moth (\textit{Plutella xylostella}) \citep{furlong2013diamondback}. These crops are among the most widely cultivated globally and in the United States. In all three systems, the addition of insecticides re–stabilized consumer–resource dynamics that had been destabilized by nutrient enrichment (Fig.~\ref{fig:s1},\ref{fig:long-term_soy},\ref{fig:long-term_wheat},\ref{fig:long-term_cabbage}), offering a plausible explanation for why the PoE has not manifested in agricultural systems since the onset of the Green Revolution.

\section*{Results}

\subsection*{Stability analysis of crop–pest system}

\noindent We recapitulated the classical findings of Rosenzweig\citep{rosenzweig1971paradox} by showing that the crop–pest system exhibits two qualitatively different long–term behaviors depending on ecological and management parameter values. We showed that there is (1) a stable equilibrium between crop and pest when crop yields are low, but (2) when crop yields are high, a Hopf bifurcation\citep{perko2013differential} is crossed and the crop and pest oscillate proportional to the nutrient levels in the system. These results are consistent with predator-prey theory \citep{murray2007mathematical} and are supported by both local and global stability analysis (Supplementary Theorems \textcolor{blue}{S.1.2}, \textcolor{blue}{S.1.3} and \textcolor{blue}{S.1.5}). The interplay of key parameters reveals contrasting effects on system stability. Increases in fertilizer input ($F$), which boost crop yield, consistently reduce system stability and amplify population oscillations, leading to a greater risk of crop damage. In contrast, higher insecticide–induced mortality ($\gamma$ or $\theta$) has a stabilizing effect, suppressing oscillations and reducing pest pressure. All of these analytical results provide a mechanistic understanding of the parameters influencing crop–pest system dynamics and under what conditions the system transitions between extinction, stability, and population oscillation. The detailed mathematical proof of all the theorems is in Supplementary Section \textcolor{blue}{S.3}.
 
We also conducted a global sensitivity analysis of the theoretical crop–pest model to identify the key parameters driving crop yield and pest population dynamics. In both the continuous \eqref{eq:study_holling_type2} and pulsed insecticide application \eqref{eq:study_holling_type2_control} systems, insecticide-induced mortality ($\gamma$ or $\theta$) and fertilizer-induced carrying capacity ($F$) exert the strongest positive influence on crop yield, followed by the intrinsic crop growth rate ($r$) and baseline carrying capacity ($K_0$). Notably, in the pulsed insecticide application \eqref{eq:study_holling_type2_control} system, the pest attack rate ($\alpha$) shows a strong negative correlation with crop yield, highlighting its detrimental effect. For pest dynamics, insecticide mortality rates ($\gamma$ and $\theta$) are the most effective suppressors in both systems, while pest growth rate ($\beta$) and fertilizer–induced carrying capacity ($F$) positively correlate with pest density in the continuous \eqref{eq:study_holling_type2} system. In the pulsed insecticide application \eqref{eq:study_holling_type2_control} system, pest attack rate ($\alpha$) and natural mortality rate ($\delta$) are positively associated with pest populations. These results emphasize the magnitude and direction of parameter effects, with $\gamma$, $\theta$, $F$, $r$, and $K_0$ playing dominant roles in promoting crop yield, and $\beta$ and $\alpha$ driving pest population changes (Supplementary Fig.~\textcolor{blue}{S.1}).

\subsection*{Long–term dynamics of crop–pest system}

\noindent To evaluate whether the results of our theoretical model were relevant to field conditions, we parameterized our model for three crop–pest systems. Our analyses reveal that fertilizer application, while enhancing crop yield, also introduces ecological instability across all three case studies. In soybean–aphid, wheat–aphid, and cabbage–diamondback moth systems, the addition of fertilizer led to population cycles and increased volatility and reduced predictability in crop and pest dynamics (Figs.~\ref{fig:s1},~\ref{fig:long-term_soy}a,~\ref{fig:long-term_wheat}a,~\ref{fig:long-term_cabbage}a). In contrast, the application of insecticides stabilized these dynamics in both the continuous and pulsed insecticide application models when insecticide efficacy was sufficiently high (Figs.~\ref{fig:s1},~\ref{fig:long-term_soy}b and c,~\ref{fig:long-term_wheat}b and c,~\ref{fig:long-term_cabbage}b and c).

For the soybean–aphid case study, the system \eqref{eq:study_holling_type2} stabilizes with the insecticides such as Besiege, Warrior II, and Lannate (Fig.~\ref{fig:long-term_soy}b). The insecticide, Leverage, however, is less effective: it delays stabilization and results in lower overall crop yield (Fig.~\ref{fig:long-term_soy}b). With pulsed insecticide applications \eqref{eq:study_holling_type2_control} and a pest threshold set at $0.1$ ($20\%$ of the maximum aphid population), the system exhibits similar stabilization across all insecticides, though the resulting yield again depends heavily on efficacy (Fig.~\ref{fig:long-term_soy}c). It should be noted that Soybean aphid dynamics in the literature are often modeled using boom-bust type models \cite{kindlmann2010modelling}. However, in the Rosenzweig-MacArthur-type models we use, appropriately scaling the parameters can produce boom-bust dynamics when the system exhibits cyclical behavior. In this case, the peak of the cycle represents the ``boom" phase, while the trough corresponds to the ``bust". Depending on the aphid species, multiple peaks within one growing season are also possible \cite{kindlmann2010modelling}. The same reasoning can be applied for the other aphid case studies herein as well.

In the wheat–aphid case study, the models stabilize with Karate Z and neonicotinoid–based insecticide applications, regardless of whether the insecticides are continuous or pulsed  (Figs.~\ref{fig:long-term_wheat}b and c). For the wheat–aphid case study, crop yield is positively correlated with insecticide efficacy, reinforcing the importance of selecting appropriate treatments for long–term system stability.

For the cabbage–diamondback moth system, stability in the continuous model \eqref{eq:study_holling_type2} is only achieved when insecticide efficacy exceeds $56.3\%$, as observed with Methomyl (Fig.~\ref{fig:long-term_cabbage}b). Less effective treatments fail to suppress pest cycles and instead lead to recurrent destabilization (Fig.~\ref{fig:long-term_cabbage}d). These patterns are reproduced in the pulse application model (Figs.~\ref{fig:long-term_cabbage}c and e), highlighting that efficacy is a key parameter for the long–term system stability and control of oscillations.

We further examined the generalized Holling type III functional response \citep{holling1959some}, where predator saturation effects are more pronounced (Supplementary Section \textcolor{blue}{S.4}). For $k = 1.1$, $1.2$, and $1.3$, the system (Supplementary System \textcolor{blue}{S.1}) transitions from oscillatory to stable dynamics as the insecticide efficacy parameter $\gamma$ increases. This results in stabilized crop yields, supporting the resolution of PoE in crop–pest systems for both Holling type II and III functional responses (Supplementary Figs.~\textcolor{blue}{S.2} and \textcolor{blue}{S.3}~a and b).

\subsection*{Bifurcation analysis of crop–pest system}
\noindent Across both the baseline ODE—continuous \eqref{eq:study_holling_type2} and pulse \eqref{eq:study_holling_type2_control} application models, we observed a consistent pattern: decreasing the feeding efficiency parameter ($\alpha$) or increasing the insecticide–induced mortality rates (either $\gamma$ or $\theta$) helps re–stabilize the destabilization caused by fertilizer application. In contrast, using the low parameter values of $\theta$ or $\gamma$ (representing weak or ineffective control) results in persistent population oscillations and thus greater variability (Fig.~\ref{fig:soy_biffur} and Supplementary Figs.~\textcolor{blue}{S.4} and \textcolor{blue}{S.5}). These qualitative trends are supported by rigorous mathematical stability analysis, which provides us with theoretical parametric restrictions (Supplementary Section \textcolor{blue}{S.3}). A comparable stabilization effect was also observed under the Holling type III functional response when $k = 1.2$ (Supplementary Fig.~\textcolor{blue}{S.3}~c and d).

Our one–dimensional bifurcation reinforces the finding that the insecticide–induced mortality stabilizes the system, and thus resolves the PoE (Fig.~\ref{fig:soy_biffur} and Supplementary Figs.~\textcolor{blue}{S.4} and \textcolor{blue}{S.5}). We also studied the two–dimensional bifurcation dynamics by varying insecticide efficacy ($\gamma$ or $\theta$) and feeding efficiency ($\alpha$). For the continuous application system \eqref{eq:study_holling_type2}, we identified a critical threshold value for $\gamma$ 
of approximately $0.65, 0.81,$ and $1.3$ for soybean, wheat, and cabbage systems, respectively (Fig.~\ref{fig:soy_biffur}e and Supplementary Fig.~\textcolor{blue}{S.6}~a,c). Once this critical threshold value of $\gamma$ is crossed, the system stabilizes regardless of the values of $\alpha$.

However, for pulsed insecticide applications \eqref{eq:study_holling_type2_control},  at lower values of $\alpha$, even a modest pulse ($\theta$) is sufficient to stabilize the system. However, as $\alpha$ increases (which indicates more efficient feeding behaviour), the system becomes increasingly resistant to stabilization, even with stronger pulsed insecticide–control inputs (Fig.~\ref{fig:soy_biffur}f, and Supplementary Figs.~\textcolor{blue}{S.6}~b and d).

For Holling type III dynamics (Supplementary System \textcolor{blue}{S.1}), we observed the existence of a critical stabilizing threshold. When $\gamma>0.8$, the system stabilizes regardless of feeding efficiency (Supplementary Fig.~\textcolor{blue}{S.3}~e). The pulse application version of this model also shows similar nonlinear interplay between $\theta$ and $\alpha$, reproducing the trends seen in the Holling type II case (Supplementary Fig.~\textcolor{blue}{S.3}~f).

\subsection*{Timing of insecticide spraying}

\noindent The timing of insecticide application can significantly influence the extent of pest–induced damage to the crop and the number of pests that need to be controlled, ultimately affecting both the stability of yield outcomes and the effectiveness of pest management strategies. This relationship is reflected in real–world agricultural practices, such as soybean farming, where insecticide use is typically guided by economic thresholds of pest density \citep{ragsdale2007economic, higley1996economic}. To explore this dynamic, we conducted numerical simulations using a parameterized soybean–aphid crop–pest model to study the effects of insecticide application timing and pest population thresholds that trigger insecticide spraying ($y_{\text{thres}}$) on crop yield stability. We tested pest population threshold values ($y_{\text{thres}}$) of 0.1, 0.2, 0.3, 0.4, and 0.5, corresponding to progressively more tolerant intervention strategies. These thresholds represent increasing tolerance levels for aphid density before triggering an insecticide spraying event. To examine the role of insecticide persistence, we explicitly marked the timing of spray applications across three insecticide efficacy regimes (Supplementary Figs.~\textcolor{blue}{S.7}, \textcolor{blue}{S.8}, \textcolor{blue}{S.9}), thereby highlighting their temporal distribution and impact on pest suppression dynamics.

When the pest threshold is set to $0.1$ ($20$\% of the maximum aphid population) or $0.2$ ($40$\%), the system achieves a stable crop yield (Supplementary Figs.~\textcolor{blue}{S.7}~a,~b) when assuming a $90\%$ mortality rate for aphids only on the day of spraying. Although a threshold of $0.3$ results in a comparatively lower yield than $y_{\text{thres}} = [0.1, 0.2]$, but the system still exhibits stabilization (Supplementary Fig.~\textcolor{blue}{S.7}~c). However, higher thresholds of $0.4$ and $0.5$ fail to stabilize the system, leading to ineffective pest suppression and destabilization of the soybean–aphid system (Supplementary Figs.~\textcolor{blue}{S.7}~d, e).

A similar trend is observed when the insecticide maintains a residual efficacy of $70$\% on the day following application. In this case, pest thresholds of $0.1$, $0.2$, and $0.3$ still support yield stabilization, whereas a threshold of $0.4$ results in partial stabilization but at a significantly reduced yield (Supplementary Figs.~\textcolor{blue}{S.8}~a--d). At a threshold of $0.5$, the system remains destabilized and yields remain critically low (Supplementary Figs.~\textcolor{blue}{S.8}~e).

The most effective outcome, characterized by both stable crop yield and low pest density, is achieved when insecticide efficacy decays gradually across three days: $90\%$ on the first day, $80\%$ on the second, and $50\%$ on the third (Supplementary Figs.~\textcolor{blue}{S.9}~a--e). Insecticides that stay effective for a longer time help keep pest populations low for longer periods. This leads to more stable crop yields because there is less chance for pests to bounce back quickly after spraying. Our results show that long–lasting insecticides are better at keeping the system stable than those that are strong at first but wear off quickly. So, the duration of insecticide effectiveness matters more for stability than just the timing of when it is applied.

\section*{Discussion}

\noindent In this study, we investigated a longstanding ecological puzzle—the Paradox of Enrichment—through the lens of modern agriculture. While classical theory predicts that nutrient enrichment should destabilize consumer–resource dynamics \citep{rosenzweig1971paradox}, real–world crop systems have defied this expectation. Despite a dramatic rise in fertilizer use since the Green Revolution, global food production has remained relatively stable and robust, even in the face of environmental variability \citep{evenson2003assessing,pingali2012green}. Here, we propose and test a novel explanation: that insecticides, commonly applied alongside fertilizers, serve as stabilizing agents that counteract the destabilizing effects of nutrient enrichment. 

We extended the classical Rosenzweig–MacArthur predator–prey model \citep{rosenzweig1963graphical} by incorporating insecticide–induced pest mortality. Our analysis included ordinary differential equations \citep{murray2007mathematical,perko2013differential} with both continuous (system \ref{eq:study_holling_type2}) and threshold–triggered (pulse) insecticide applications (system \ref{eq:study_holling_type2_control}) and grounded parameterizations for three major crop–pest systems: (1) soybean and soybean aphid, (2) winter wheat and bird cherry–oat aphid, and (3) cabbage and diamondback moth. In each system, fertilizer increased crop yield but introduced instability in the form of population cycles consistent with the PoE (Figs.~\ref{fig:s1},~\ref{fig:long-term_soy}a, ~\ref{fig:long-term_wheat}a, and \ref{fig:long-term_cabbage}a). However, the addition of insecticides re–stabilized dynamics, provided their efficacy surpassed a system–specific threshold (Figs.~\ref{fig:s1},~\ref{fig:long-term_soy}b and c, ~\ref{fig:long-term_wheat}b and c, and \ref{fig:long-term_cabbage}b and c).

Our mathematical analysis (Figs.~\ref{fig:soy_biffur} and Supplementary Figs.~\textcolor{blue}{S.4}--\textcolor{blue}{S.5}) demonstrates that stability in enriched agroecosystems is critically dependent on insecticide efficacy ($\gamma$ or $\theta$). Below the efficacy threshold, enrichment leads to oscillations in crop and pest densities, potentially compromising yield reliability. Above the threshold, insecticide applications restore stability, even in systems with high pest growth rates. Importantly, these findings held across both Holling type II and III functional responses \citep{holling1959some}, suggesting that the results are robust to different assumptions about pest foraging behavior.

Sensitivity (Supplementary Fig.~\textcolor{blue}{S.1}) and bifurcation analyses (1 dimensional: Supplementary Figs.~\textcolor{blue}{S.4} and \textcolor{blue}{S.5}; 2 dimensional: Supplementary Fig.~\textcolor{blue}{S.6}) revealed that insecticide efficacy interacts nonlinearly with other parameters, particularly pest feeding efficiency ($\alpha$), to determine long-term dynamics. In systems using threshold–triggered insecticide applications, stabilization was more difficult to achieve when $\alpha$ was high, underscoring the importance of both ecological interactions and management practices. We also showed that the persistence of insecticide effectiveness across multiple days (rather than a single pulse) improved yield, highlighting the role of formulation and environmental degradation in pest control efficacy (Supplementary Figs.~\textcolor{blue}{S.7}--\textcolor{blue}{S.9}).

Crucially, our work reframes how insecticides are evaluated in agroecosystems \citep{dent2020insect}. Traditionally, insecticides have been assessed based on their ability to increase average crop yields by minimizing yield losses caused by pests. \citep{oerke2006crop,savary2019global}. However, our models suggest that a principal benefit may lie in stabilizing yields—reducing temporal variability and ensuring predictable harvests in the face of nutrient enrichment and pest outbreaks. This reframing has implications not only for agronomic decision-making but also for how insecticide regulations and policies are developed in the context of food security and environmental sustainability \citep{tilman2002agricultural,mohring2020pathways}.

From a broader ecological perspective, our results help resolve an important theoretical inconsistency between classical ecological models and real–world agricultural dynamics. While previous work has proposed mechanisms such as low food quality \citep{urabe1996regulation} and induced defenses \citep{verschoor2004inducible} to buffer natural systems against enrichment-induced instability, these mechanisms are often absent or minimized in simplified, monoculture–dominated agroecosystems. Our findings suggest that anthropogenic inputs namely, insecticides serve a critical, underappreciated, stabilizing role in these systems. Despite being parameterized, our modeling approach remains theoretical, underscoring the need for empirical validation. To test the hypothesis that insecticides stabilize invertebrate consumer–resource dynamics, we encourage experiments that cross nutrient and insecticide additions and track both consumer and resource abundance and variance through time. Alternatively, comparing pest pressures and annual crop yields between organically grown systems (receiving fertilizer only) and conventionally managed systems (fertilizer plus synthetic insecticides) could provide valuable evidence for the yield-stabilizing effects of insecticides in agriculture.

At the same time, the environmental costs of insecticide use are well documented, including impacts on non–target species, soil microbiota, and water quality \citep{banerjee2023soil,kohler2013wildlife,stehle2015agricultural}. Our model highlights that only moderate efficacy levels are needed to restore stability, suggesting that lower doses or targeted applications may be sufficient in many cases. This opens up opportunities for optimizing pest management strategies to balance yield stability with ecological sustainability \citep{wyckhuys2020ecological,barzman2015eight,pretty2018intensification}.

Finally, while our model is intentionally simple, it is designed to be flexible and extensible. Future work could incorporate more complex biological interactions such as predator–prey–patch dynamics \citep{simon2011toxicology}, biocontrol agents \citep{bahlai2015shifts}, pest resistance \citep{bras2022pesticide}, and economic and optimal costs of control strategies \citep{ragsdale2007economic,mulungu2024economic}. These extensions could further inform the design of resilient agroecosystems that are both productive and environmentally responsible.

\section*{Methods}

\subsection*{Model}


\noindent To model the dynamics between crops and pests, we used classical predator-prey dynamics based on the well-known Rosenzweig–McArthur predator–prey framework\citep{rosenzweig1963graphical}. Specifically, we considered a Holling type II response\citep{holling1959some}, which introduces a saturating effect of predation in the model. For our work, we developed an analogous crop–pest model where the crop yield can be increased by adding fertilizer, and insecticide usage induces mortality of the insect pest.

\subsubsection*{Crop–pest system}
\noindent Let $(x)$ and $(y)$ denote the crop and pest in the agroecosystems. In our system, a crop $(x)$ follows a logistic growth with a growth rate $(r)$, and it can reach to carrying capacity $(K_0)$ without fertilizer. Using fertilizer results in an increased yield of $(F)$. The pest $(y)$ attacks the crop with rate $(\alpha)$, which has a growth rate of $(\beta)$, and with a natural mortality rate of $(\delta)$. Use of insecticide results in insecticide-induced mortality ($\gamma$). The $\gamma$ is calculated as $-\ln\text{(survival})/\text{time}$.  The details of the parameters and their units are in Table \ref{tab:1}. The crop–pest dynamics is modeled using the system of nonlinear ordinary differential equations:

\begin{equation}\tag{1}
	\begin{array}{l}
		\dfrac{dx}{dt} = rx \left( 1 - \dfrac{x}{K_0 + F} \right) - \dfrac{\alpha x y}{1 + hx}, \\[12pt]
\dfrac{dy}{dt} = \dfrac{\beta x y}{1 + hx} - \delta y - \gamma y.
	\end{array}
	\label{eq:study_holling_type2}
\end{equation}

One limitation of the system in \eqref{eq:study_holling_type2} is that it assumes insecticide spraying occurs continuously. To increase the realism, we provide a pulse version that relies on a pest threshold, meaning that insecticide application is only necessary when the pest population exceeds this threshold \citep{ragsdale2007economic}. The pulse formulation of the model is triggered by the pest population threshold $(y_{\text{thresh}})$

\begin{equation}\tag{2}
	\begin{array}{ll}
		\dfrac{dx}{dt} = rx \left( 1 - \dfrac{x}{K_0 + F} \right) - \dfrac{\alpha x y}{1 + hx}, \\[12pt]
		\dfrac{dy}{dt} = \dfrac{\beta x y}{1 + hx} - \delta y , \\[12pt]
		\text{If } y > y_{\text{thresh}}, \hspace{.05in} \text{then } y \rightarrow (1-\theta) y,
	\end{array}.
	\label{eq:study_holling_type2_control}
\end{equation}
where $\theta$ is the mortality (in percentage) caused by insecticide to the pest. The system \eqref{eq:study_holling_type2_control}, helps us in setting a pest threshold, which triggers the insecticide spraying usage.

\subsection*{Techniques}

\noindent\textbf{Stability analysis:} We analyzed the stability of the crop–pest system using techniques from dynamical systems theory, including linear stability analysis, global stability analysis, and other standard methods for ordinary differential equations \citep{murray2007mathematical,perko2013differential}. We derived parametric conditions that govern the system’s qualitative behavior and gained mechanistic insight into how parameters influence transitions between extinction, stable coexistence, and population oscillations. The stability analyses provide a deeper understanding of the thresholds that determine stability regimes in the crop–pest interaction. Full mathematical derivations and proofs of all theorems are in Supplementary Section \textcolor{blue}{S.3}.

\noindent\textbf{Global sensitivity analysis:} To understand the influence of model parameters on the crop–pest system for both continuous \eqref{eq:study_holling_type2} and pulse \eqref{eq:study_holling_type2_control} insecticide application systems, we performed a correlation–based global sensitivity analysis using both Pearson and Spearman correlation coefficients \citep{bewick2003statistics}. The correlation-based analysis helps us identify which parameters most significantly influence the crop and pest densities. Pearson correlation captures linear associations, while Spearman correlation detects monotonic (including nonlinear) relationships, enabling a comprehensive assessment of parameter sensitivity. 

We first generated parameter sets using Latin Hypercube Sampling (LHS) \citep{loh1996latin}, which ensures efficient and representative sampling across the multidimensional parameter space. For each sampled set, both systems \eqref{eq:study_holling_type2} and \eqref{eq:study_holling_type2_control} were simulated, and the resulting crop and pest densities were recorded. Correlation coefficients were then computed between the sampled parameters and model outputs to identify parameters with positive and negative correlations in the systems.

\noindent\textbf{Bifurcation Analysis: } To understand the mechanisms underlying crop–pest system stabilization, we conducted one– and two–dimensional bifurcation analyses. Bifurcation analysis is a mathematical technique used to study how the qualitative behavior of a dynamical system changes as model parameters are varied, particularly focusing on transitions in stability \citep{murray2007mathematical,perko2013differential}.

For one–dimensional bifurcation, we focus on key control parameters, insecticide efficacy $\gamma$ for system \eqref{eq:study_holling_type2} and $\theta$ for system \eqref{eq:study_holling_type2_control}. Given the limited empirical data on the pest's feeding efficiency ($\alpha$), we also explored how interactions between insecticide efficacy and feeding traits shape the dynamics. To understand the interplay of $\alpha$ and insecticide efficacy parameters (either $\gamma$ or $\theta$), we used the two–dimensional bifurcation analysis.

\subsection*{Background on focal crop-pest systems}
\noindent We focused on three economically significant crop–pest systems: soybean (\textit{Glycine max (L.) Merr.}) with soybean aphid (\textit{Aphis glycines}), winter wheat (\textit{Triticum aestivium L.}) with bird cherry-oat aphid (\textit{Rhopalosiphum padi L.}), and cabbage with diamondback moth (\textit{Plutella xylostella}). These systems were selected due to their broad geographic cultivation, substantial economic importance, and vulnerability to key insect pests that vary in feeding guild and epidemiological role. Soybean and wheat are globally dominant field crops\citep{usdaSoybeans2025,usdaWheat2025}, whereas cabbage represents a major vegetable crop with high–value production\citep{FAO2001cabbage}. The chosen pests include both direct-damaging herbivores and virus vectors, collectively responsible for billions of dollars in crop losses annually \citep{ragsdale2007economic,tilmon2011biology,perkins2018impact,dunn2024regional}. A detailed background on each focal crop–pest system is provided in Supplementary Section \textcolor{blue}{S.1}.

\subsection*{Parametrization of case studies}
\subsubsection*{Soybean (\textit{Glycine max (L.) Merr.}) and Soybean Aphid (\textit{Aphis glycines})}
 
\noindent Soybean crop growth rate and carrying capacity are parameterized based on yield data\citep{USDA_Soybean_Production,ball2000optimizing,cordova2020soybean}. The growth rate is estimated at $0.451 \text{day}^{-1}$\citep{ball2000optimizing}. The crop yield without fertilizer is $K_0 = 2.797 ~\text{T/Ha}$\citep{cordova2020soybean}, with yield increases under different nitrogen fertilization strategies ($F=0.559$ T/Ha and $0.291$ T/Ha)\citep{cordova2020soybean}. Soybean aphid population growth is modeled exponentially with a rate $\beta\approx \frac{\ln 2}{2}=0.3466 ~\text{day}^{-1}$, consistent with population doubling every two days under favorable conditions\citep{ragsdale2007economic,tilmon2011biology}. The natural mortality rate of aphid is $\delta = 0.05 ~\text{day}^{-1}$\citep{ragsdale2007economic, tilmon2011biology, pioneerSoybeanAphidMgmt}. The parameterization of the insecticide mortality rate is based on a study conducted at the South Dakota State University Volga Research Farm in 2017 \citep{dierks2019evaluating, menger2020implementation} (See Supplementary Table \textcolor{blue}{S.1} ).

\noindent \subsubsection*{Winter wheat (\textit{Triticum aestivium L.}) and bird cherry–oat aphid (\textit{Rhopalosiphum padi L.})}

\noindent For the wheat crop, we model growth using a logistic curve with maximum yield $K = 4.185$ T/Ha, determined from average Tennessee yields over $2006$-$2016$ \citep{perkins2018impact}. The baseline unfertilized yield is set to $K_0 = 1.877$ T/Ha, and the fertilization gain is $F = 2.308$ T/Ha, based on field trials in Idaho and Washington that showed a $123\%$ increase in yield under ammonium nitrate treatments \citep{mahler1994nitrogen}. Growth rate is estimated from logistic model fitting using the time to reach 99\% of yield ($\sim$115 days), giving $r = 0.0722$ day$^{-1}$ (Supplementary Section \textcolor{blue}{S.1.2}). The bird cherry–oat aphid has growth and mortality rates similar to those of the soybean aphid, using $\beta = 0.3466$ day$^{-1}$ and $\delta = 0.05$ day$^{-1}$ \citep{ragsdale2007economic, tilmon2011biology, pioneerSoybeanAphidMgmt}. Insecticide efficacy data is taken from late-winter foliar spray trials \citep{perkins2018impact}.

\noindent \subsubsection*{Cabbage and Diamondback moth (\textit{Plutella xylostella})}

\noindent To parameterize the effect of fertilization on cabbage, we have used a field experiment, where researchers examined the impact of various fertilization and irrigation methods on the yield of white cabbage \citep{vsturm2010effect}, conducted in the Ljubljansko polje region, east of Ljubljana, the capital of Slovenia. The study found that the yield in unfertilized control plots, using the farmers' irrigation practices, was $K_0 = 47$ T/Ha. When farmers applied fertilization practices in accordance with the \textit{Regulations on Integrated Production of Vegetables} and the \textit{Technological Instructions for Vegetable Production} \citep{OGRS2002}, in addition to irrigation, the yield increased by $97.87\%$, reaching a total of $93$ T/Ha (i.e., $F = 46$ T/Ha). Using a logistic–type growth model, we estimate the intrinsic growth rate of cabbage as $r = 0.091$ day$^{-1}$. The growth rate of diamondback moth is $\beta=0.3466  \hspace{.051in} \text{day}^{-1}$ \citep{kahuthia2008development} and  $\delta=0.05 \hspace{.1in}\text{day}^{-1}$  \citep{al2019role}. 

To understand the effect and efficacy of multiple registered active ingredients in insecticides on diamondback moth, we used the findings of a field study conducted in Georgia and Florida (See Supplementary Table \textcolor{blue}{S.2})\citep{dunn2024regional}. We also evaluated the effectiveness of the baculovirus Plutella xylostella Granulovirus (PlxyGV) for managing this pest based on research conducted in Kabete and Thika, near Nairobi \citep{bhattacharyya2006pest}.

\subsection*{Timing of insecticide spraying}

\noindent Insecticides are typically recommended only when pest populations exceed a defined economic threshold \citep{ragsdale2007economic}. This method has proven successful in lowering production costs, minimizing environmental hazards from insecticides, and aims to increase profits \citep{mumford1984economics}. This approach aligns with real-world practices, such as those observed in soybean farming, where farmers typically apply insecticides only after aphid counts surpass critical levels. For soybean crops, once aphids are detected, their population per plant should be estimated. If the count reaches around $100$ to $250$ aphids per plant, advisable to apply insecticide \citep{higley1996economic,ragsdale2007economic}.

We used a parameterized case study of soybean aphid management, where insecticide application is triggered once the aphid population exceeds a specified threshold, $y_{\text{thres}}=[0.1,0.2,0.3,0.4,0.5].$ We simulate three insecticide effectiveness scenarios: (1) The insecticide mortality is $90\%$ and is only effective on the day of spraying, (2) The insecticide mortality is $90\%$ on the day of spraying and $70\%$ for the following day, and (3) The insecticide mortality is $90\%, 80\%$, and $50\%$ for three days. 

\subsection*{Data availability}
\noindent No original data were used in this manuscript, and all search terms and articles from the literature review are described and cited.


\subsection*{Acknowledgments}
\noindent This research was supported by grants from the National Science Foundation (DEB-2017785, DEB-2109293, BCS- 2307944, ITE- 2333795), US Department of Agriculture (2021-38420-34065), the Frontier Research Foundation, and the University of Notre Dame Poverty Initiative to JRR. RDP acknowledges support by the Agricultural and Food Research Initiative Grant no. 2023-67013-39157 from the USDA National Institute of Food and Agriculture.

\subsection*{Author contributions}

\noindent V.S.: conceptualization, formal analysis with assistance from J.R.R. and R.D.P., methodology, software, visualization, writing—original draft, writing—review and editing; J.R.R.: conceptualization, methodology, supervision, writing—original draft, writing—review and editing; R.D.P.: conceptualization, methodology, supervision, visualization, writing—original draft, writing—review and editing. All authors gave their final approval for publication.

\subsection*{Competing interests}
\noindent The authors declare no conflicts of interest.

\subsection*{Additional information}

\noindent\textbf{Supplementary information:} The supplementary material for review is attached after the reference section.

\noindent\textbf{Correspondence and requests for materials} should be addressed to Vaibhava Srivastava or Jason R. Rohr.

\bibliography{ref.bib}

\newpage

\begin{table}[H]
    \caption{\textbf{Model parameters and values used in crop–pest systems.}}
    \centering
    \resizebox{\textwidth}{!}{%
    \begin{tabular}{|l|l|c|c|c|l|}
        \hline
        \textbf{Parameter} & \textbf{Description} & \textbf{Cabbage-DBM} & \textbf{Wheat-Aphid} & \textbf{Soybean-Aphid} & \textbf{Units} \\
        \hline
        $r$        & Crop growth rate                         & 0.0911\citep{murray2007mathematical} & 0.0722\citep{murray2007mathematical} & 0.451\citep{ball2000optimizing} & day$^{-1}$ \\
        $K_0$      & Base carrying capacity (no fertilizer)   & 47\citep{vsturm2010effect}     & 1.877\citep{perkins2018impact,mahler1994nitrogen}   & 2.797\citep{cordova2020soybean} & tons/ha \\
        $F$   & Increased Carrying capacity (with fertilizer)               & 46\citep{OGRS2002} & 2.3080\citep{perkins2018impact,mahler1994nitrogen} & \begin{tabular}[c]{@{}l@{}} 0.291\cite{cordova2020soybean}\\0.559\cite{cordova2020soybean}\end{tabular} & tons/ha \\
        $\alpha$   & Maximum competition rate (pest on crop)     & 0.3    & 1      & 2.5   & unitless \\
        $\beta$    & Pest growth efficiency                   & 0.1477\citep{kahuthia2008development} & 0.3466\citep{ragsdale2007economic,tilmon2011biology} & 0.3466\citep{ragsdale2007economic,tilmon2011biology} & day$^{-1}$ \\
        $\delta$   & Pest natural mortality rate              & 0.1\citep{al2019role}    & 0.05\citep{pioneerSoybeanAphidMgmt,tilmon2011biology}   & 0.05\citep{pioneerSoybeanAphidMgmt,tilmon2011biology}  & day$^{-1}$ \\
        $h$        & Handling time                            & 0.1    & 0.1    & 0.1   & days \\
        \hline
    \end{tabular}%
    }
    \caption*{\footnotesize \\
Detailed description of the parameters used in the crop–pest system (continuous version: \eqref{eq:study_holling_type2} and pulse application version: \eqref{eq:study_holling_type2_control}), including their ecological description and the parameterized values applied for three case studies: (1) Cabbage–Diamondback Moth (DBM), (2) Wheat–Aphid, and (3) Soybean–Aphid systems.}
    \label{tab:1}
\end{table}

\newpage

\begin{figure}[H]
\includegraphics[width=1\linewidth]{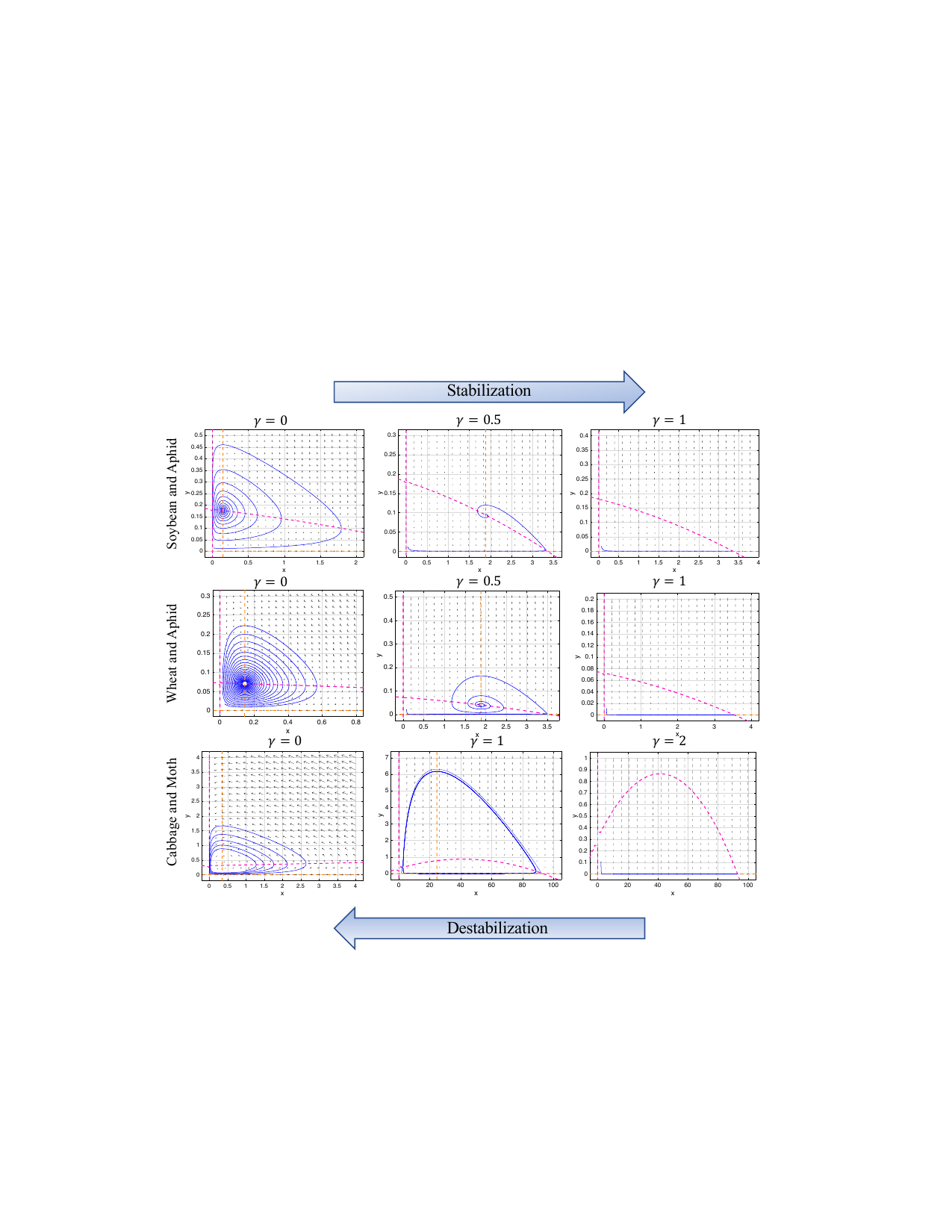}
   \caption{\textbf{Phase diagram illustrating the impact of insecticide-induced mortality parameter $(\gamma)$ on the stabilization and destabilization of crop–pest systems.} Phase diagram showing the insecticide-induced mortality parameter $(\gamma)$, which stabilizes the crop-pest dynamics for all three major crop–pest systems: (1) Soybean and Soybean Aphid, (2) Winter wheat and bird cherry-oat aphid, and (3) Cabbage and Diamondback moth. In the phase diagram, the magenta curve represents the crop-nullcline, the orange curve represents the pest-nullcline, the intersection of nullclines represents equilibrium points, and the blue curve illustrates the population trajectory. The system parameters are listed in Table \eqref{tab:1}. (Note: A phase plot is a geometric visualization of the trajectories of an ordinary differential equation (ODE), showing how the system evolves over time from a given initial condition. A nullcline is a curve in the phase plane where the rate of change of one of the variables in the ODE is zero.) }
    \label{fig:s1}
\end{figure}

\newpage

\begin{figure}[H]
\includegraphics[width=1\linewidth]{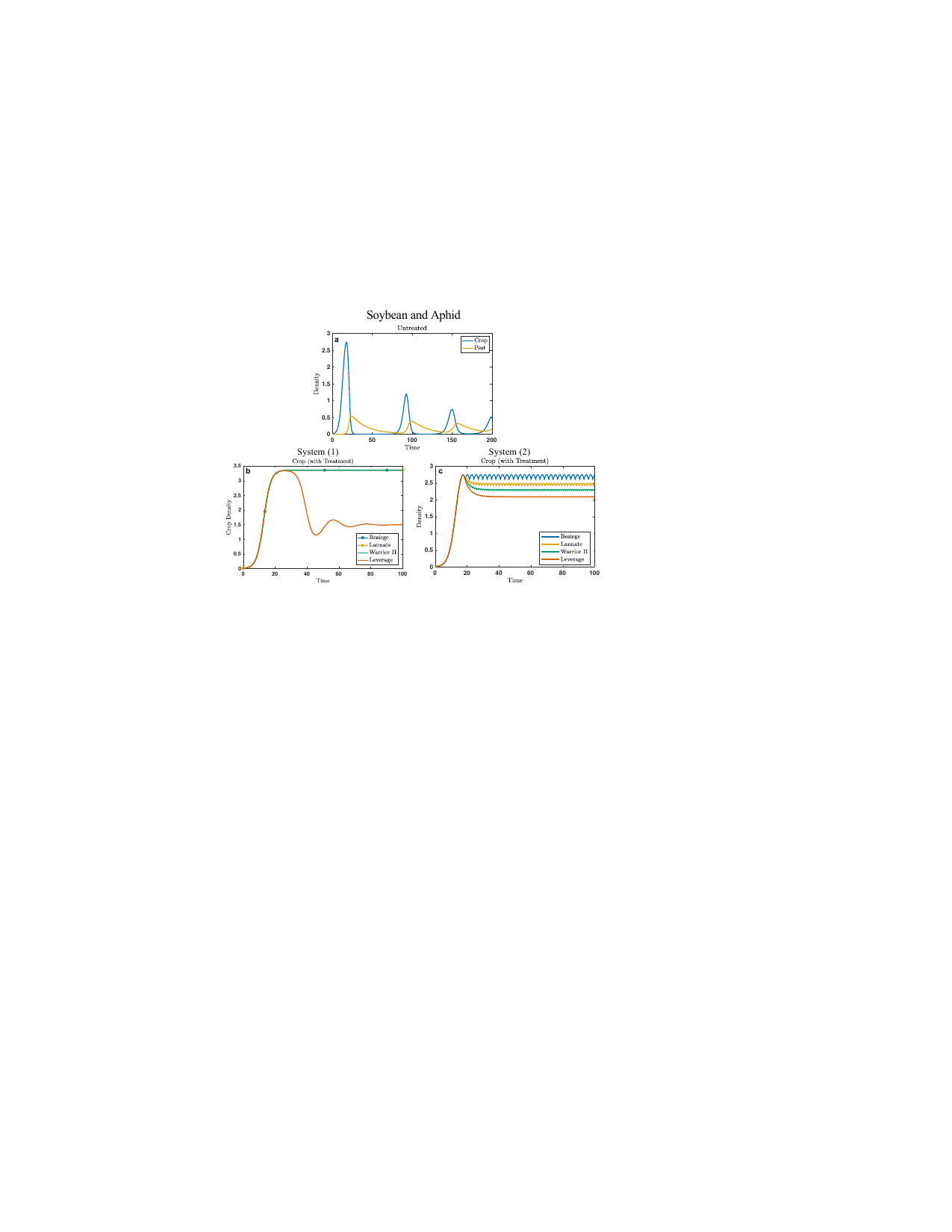}
   \caption{\textbf{The role of insecticide in stabilizing crop yield in the soybean-aphid system.} The long-term dynamical behavior of the soybean-aphid crop-pest model. \textbf{a}, Application of fertilizer caused destabilization in the system. \textbf{b} and \textbf{c}, Application of insecticides stabilized the soybean-aphid crop-pest model. The system parameters are listed in Table \eqref{tab:1}.}
    \label{fig:long-term_soy}
\end{figure}

\newpage

\begin{figure}[H]
\includegraphics[width=1\linewidth]{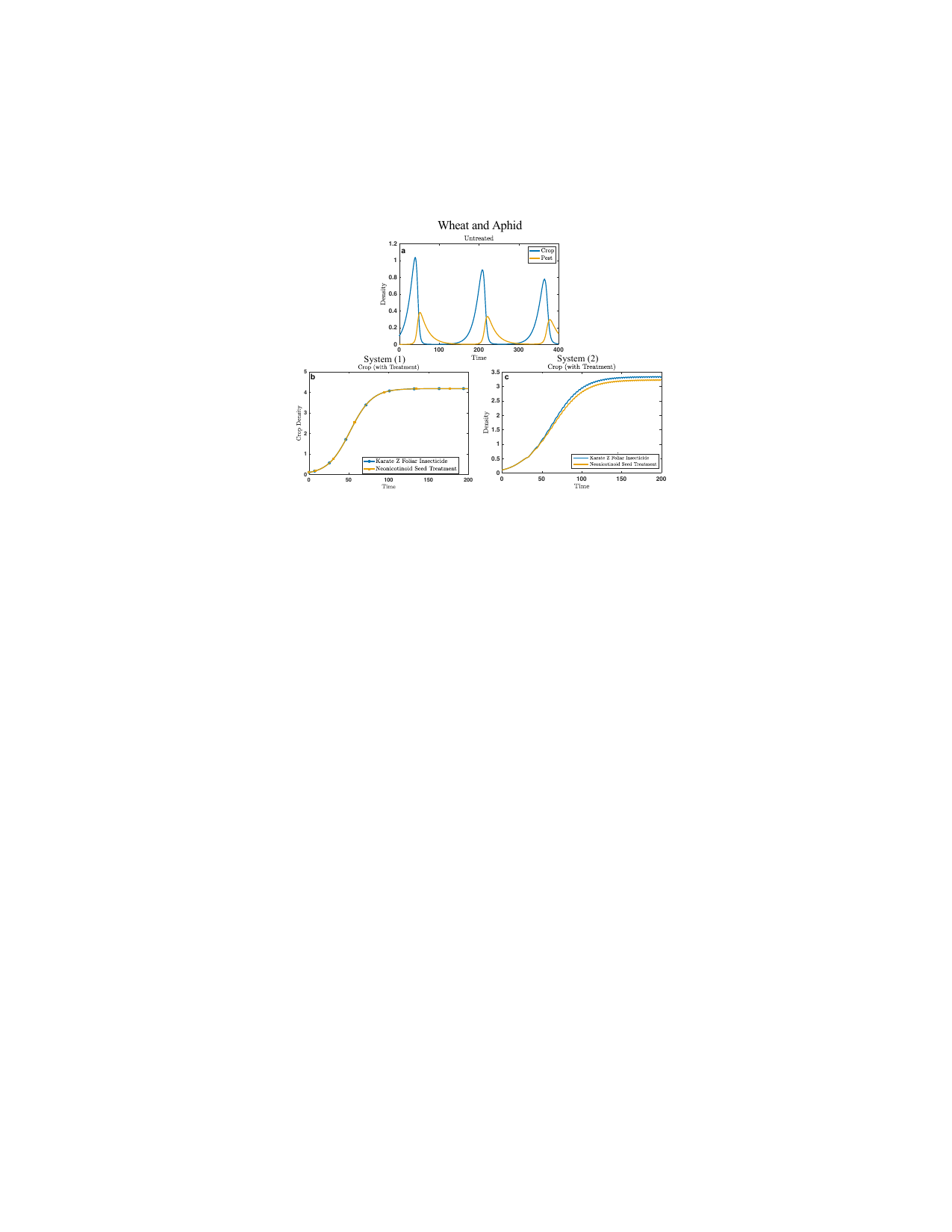}
   \caption{\textbf{The role of insecticide in stabilizing crop yield in the wheat-aphid system.} The long-term dynamical behavior of the wheat-aphid crop-pest model. \textbf{a}, Application of fertilizer caused destabilization in the system. \textbf{b} and \textbf{c}, Application of insecticides stabilized the wheat-aphid crop-pest model. The system parameters are listed in Table \eqref{tab:1}.}
    \label{fig:long-term_wheat}
\end{figure}

\newpage

\begin{figure}[H]
\includegraphics[width=1\linewidth]{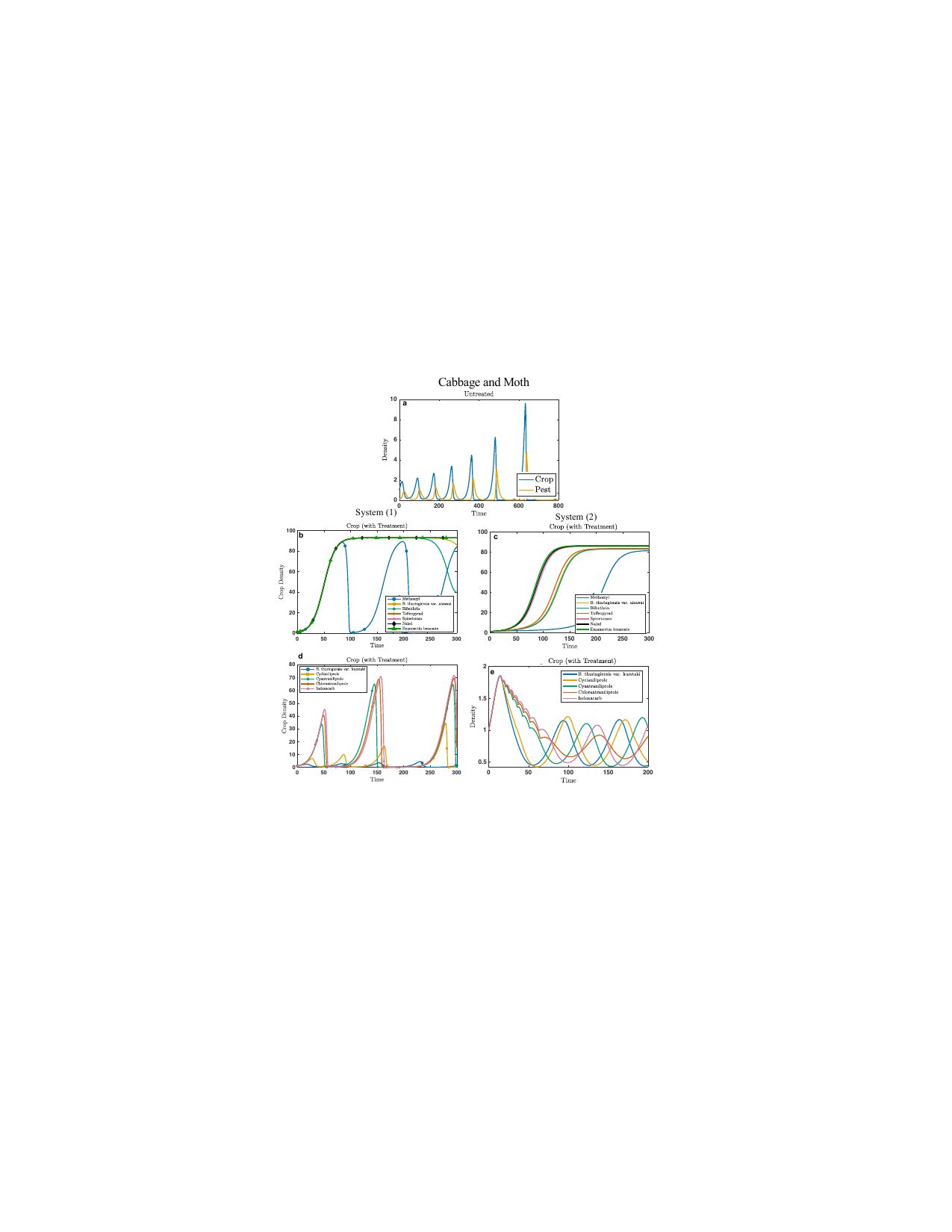}
   \caption{\textbf{The role of insecticide in stabilizing crop yield in the cabbage–moth system.} The long-term dynamical behavior of the cabbage-moth crop-pest model. \textbf{a}, Application of fertilizer led to destabilization in the system. \textbf{b} and \textbf{c}, Use of high-efficacy insecticides helped stabilize the system. \textbf{d} and \textbf{e}, Low-efficacy insecticides were ineffective in stabilizing the system. The system parameters are listed in Table \eqref{tab:1}.}
    \label{fig:long-term_cabbage}
\end{figure}

\newpage

\begin{figure}[H]
\includegraphics[width=1\linewidth]{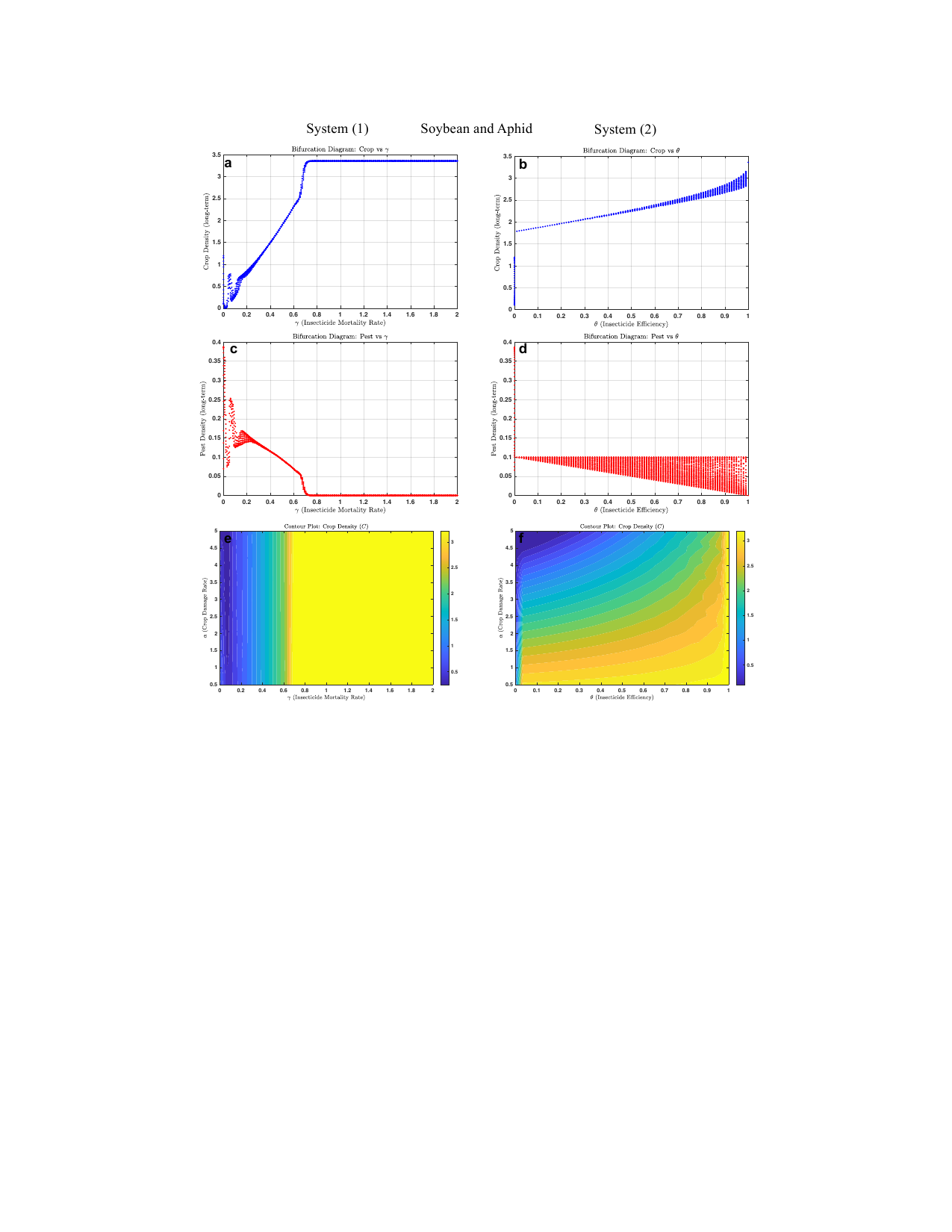}
   \caption{\textbf{Bifurcation analysis of the soybean–aphid system.} The one and two-dimensional bifurcation for the soybean-aphid system. The other system parameters are listed in Table \eqref{tab:1}. (Note: Bifurcation analysis is a mathematical technique used to study how the qualitative behavior of a dynamical system changes as parameters are varied, particularly focusing on transitions in stability \citep{perko2013differential}).}
    \label{fig:soy_biffur}
\end{figure}

\newpage

\end{spacing}

\newpage

\beginsupplement

\begin{center}
	\Large{Supplementary Materials for \\[2ex] \textbf{Fertilizers Fuel, Insecticides Stabilize: Resolving the Paradox of Enrichment in Agriculture}}
\end{center}

\begin{center}
	\textbf{Vaibhava Srivastava$^{1,\dagger,*}$, Jason R. Rohr$^{2,\dagger}$, and Rana D. Parshad$^{3}$}
\end{center}

\begin{center}
	\textit{
		1) Department of Mathematics and Statistics, University of Massachusetts Amherst, Amherst, MA 01003, USA.\\
		2) Department of Biological Sciences, Environmental Change Initiative, Eck Institute of Global Health, University of Notre Dame, Notre Dame, IN 46556, USA.\\
		3) Department of Mathematics, Iowa State University, Ames, IA 50011, USA.
	}
\end{center}

\begin{center}
	\textbf{Corresponding author:} Vaibhava Srivastava (\texttt{vsrivastava@umass.edu})\\
	\textbf{Email:} Jason R. Rohr (\texttt{jrohr2@nd.edu}, \texttt{jasonrohr@gmail.com}), Rana D. Parshad (\texttt{rparshad@iastate.edu})\\[4pt]
	$^\dagger$These authors contributed equally. \quad $^*$Corresponding author.
\end{center}

	\noindent {\bf This PDF file includes:}

	\noindent Sections S1 to S4
	\begin{enumerate}
		\item[\textbf{S.1.}] Background on focal crop-pest systems
		\item[\textbf{S.2.}] Parametrization of case studies
		\item[\textbf{S.3.}] Stability analysis and numerical simulations for Holling type II
		\item[\textbf{S.4.}] Stability analysis and numerical simulations for Holling type III
	\end{enumerate}

	\noindent Tables S1 to S2 \\
	\noindent Figures S1 to S9

\setcounter{page}{37}

\clearpage

\doublespacing

\renewcommand{\thesection}{S.1}
\renewcommand{\thesubsection}{S.1.\arabic{subsection}}
\renewcommand{\thesubsubsection}{S.1.\arabic{subsection}.\arabic{subsubsection}}

\section*{S.1. Background on focal crop-pest systems}

\subsection*{S.1.1. Soybean (\textit{Glycine max (L.) Merr.}) and Soybean Aphid (\textit{Aphis glycines})}

\noindent Soybean (\textit{Glycine max (L.) Merr.}) grain significantly contributes to world agriculture and the economy. As per the report \citep{usdaSoybeans2025}, the global soybean production for $2024/2025$ is projected at $420.78$ million metric tons, reflecting a $6\%$ year-over-year increase. This continues a $10$-year compound growth trend of $3\%$, with an average production of $364.19$ million tons from $2015$ to $2024$. Brazil leads with $40\%$ of global output ($169$ million tons), followed by the United States ($28\%$), Argentina ($12\%$), China ($5\%$), and India ($3\%$). In the United States, in $2024$, soybean production reached a total of $4.37$ billion bushels, a $5$ percent increase from $2023$ \citep{usda2024crop}, and the harvested area, now at $86.1$ million acres, also grew by $5$ percent compared to last year \citep{usda2024crop}. All of these statistics not only underpin the importance of the soybean crop in the agriculture landscape but also make it a vital and significant asset to the global economy. 

Major threats to the soybean come from invasive pests such as soybean aphid (\textit{Aphis glycines}) \citep{ragsdale2007economic, tilmon2011biology}. The soybean aphid causes devastating damage to the crop across the north-central region of the United States and southern Canada \citep{ragsdale2007economic,venette2004assessing}, as well as parts of Asia, including India and China \citep{wu2004soybean,czepak2013first}. In South America—including Brazil\citep{pomari2015helicoverpa,bortolotto2015use,czepak2013first} and Argentina\citep{murua2014first}, and Mid-Southern United States \citep{musser2012soybean,haile2021overview}, the major damage to the soybean crop is caused due to \textit{Helicoverpa armigera}, a agriculture pests that cause yield loss to soybean, particularly in warmer tropical and subtropical regions of the Americas. According to a report\citep{song2006profitability}, soybean aphid infestations in $2003$ affected over $42$ million acres of soybeans in the north-central United States, potentially resulting in yield losses of more than $350$ million bushels and an estimated economic impact of over $\$2.4$ billion if this pest is not managed efficiently and left untreated.

\subsection*{S.1.2. Winter wheat (\textit{Triticum aestivium L.}) and bird cherry-oat aphid (\textit{Rhopalosiphum padi L.})}

\noindent Wheat is one of the most widely cultivated crops in the world, making it important for the global economy. As per the report \citep{usdaWheat2025}, for the $2024/2025$ marketing year, global production in this category is projected at $799.91$ million metric tons, a modest $1\%$ increase from the previous year. The $10$-year average yield is approximately $768.16$ million tons, with a compound annual growth rate of $0.79\%$. China, the European Union, and India are the top producers, accounting for $18\%, 15\%,$ and $14\%$ of global output, respectively. The United States contributes $7\%$ ($53.65$ million tons). In the United States, wheat is the third most cultivated field crop \citep{USDA_ERS_Wheat}, and white wheat, which includes both winter and spring varieties, makes up $12$ to $17$ percent of total wheat production. It is primarily grown in Washington, Oregon, Idaho, Michigan, and New York.

The major threat to winter wheat is from the bird cherry-oat aphid (\textit{Rhopalosiphum padi L.}). It not only damages winter wheat directly, but also acts as an efficient vector of the barley yellow dwarf (BYD) virus (Family: \textit{Luteoviridae}) to winter wheat\citep{riedell1999winter,kieckhefer1992yield}. The aphid's negative impact on the winter wheat during the early stages of the crop leads to a significant losses in the quality of grain and overall yield \citep{riedell1999winter,pike1985development,willocquet2008simulating}. 

\subsection*{S.1.3. Cabbage and diamondback moth (\textit{Plutella xylostella})}

\noindent Cabbage (\textit{Brassica oleracea var. capitata}) originates from the southern and western coastal regions of Europe. It is a widely grown vegetable, cultivated across many countries worldwide. China contributes approximately $47$ to $49\%$ of the total global cabbage production, making it the largest producer worldwide. India is the second-largest producer, with a cabbage output of $9.13$ million metric tons in $2019$, marking an increase of $542$ thousand metric tons since $2015$\citep{TAMUCabbage2021}. Globally, approximately $55$ million tons of fresh cabbage heads are harvested annually from around $2.6$ million hectares of land \citep{FAO2001cabbage}. In the United States, the southeastern parts are an important hub for producing cruciferous vegetables and greenhouse-grown transplants. Among these, cabbage crops significantly contribute to the economies of Florida and Georgia. For example, cabbage production alone generated approximately $\$45.2$ million in Georgia and $\$72.6$ million in Florida in 2022 \citep{USDA_NASS_2023}.

The major pest to the cruciferous vegetables is the diamondback moth (\textit{Plutella xylostella L.}) \citep{talekar1993biology,furlong2013diamondback}. This pest specifically targets crops from the \textit{Brassica} family \citep{talekar1993biology,furlong2013diamondback}. Due to this pest infestation, the estimated annual costs associated have increased significantly. For example, it increased from $\$1$ billion in the 1990s \citep{talekar1993biology} to about $\$4$-$5$ billion in 2010 \citep{zalucki2012estimating}, presenting a challenge to the country's economy and agricultural landscape.

\section*{S.2. Parametrization of case studies}

\subsection*{S.2.1. Soybean (\textit{Glycine max (L.) Merr.}) and Soybean Aphid (\textit{Aphis glycines})}
The growth rate of soybean crops is $0.451 \, \text{day}^{-1}$ \citep{ball2000optimizing}, and the average soybean yield over the 5 years from 2019/20 to 2023/24 is $3.37 \, \text{T/Ha}$ \citep{USDA_Soybean_Production}. Without fertilization, the maximum observed soybean yield was $K_0=2.797$ T/Ha \citep{cordova2020soybean}. Whereas with proper fertilization, the yield varied depending on timing, dosage, etc. In  \cite{cordova2020soybean}, authors claimed that when they used $135$ kg N/Ha at planting and again at the R1/R2 growth stages, they increased yield by approximately $20\%$. However, when they used the reduced application of $45$ kg N/Ha at the later R3/R4 stages, yields were only $8\%$ lower than those achieved with the full $135$ kg N/Ha dosage at R1/R2. So, for two applications of $135$ kg N/Ha $(F=0.559$ T/Ha), whereas for $45$ kg N/Ha, $(F=0.291$ T/Ha)

Soybean aphids' birth rate is $3$ to $8$ per day for $30$ days. The generation time is $7$ to $10$ days. As a result, they can experience exponential growth, with populations potentially doubling every $2$ to $3$ days under favorable \citep{ragsdale2007economic,tilmon2011biology,iowaSoybeanAphid}. Following their exponential growth ($N(t) = N_0 e^{\beta t}$), and their population-doubling effect, we have 
$T_{\text{double}}= \frac{\ln 2 }{\beta}$. Since the population doubles every two days, we have $\beta \approx \frac{\ln 2 }{2} = 0.3466 \hspace{.05in} \text{day}^{-1}$. For the natural mortality rate \citep{pioneerSoybeanAphidMgmt,umnSoybeanAphid}, we have $\delta = 1/(\text{Life span in day})= 1/20 = 0.05 \hspace{.05in}  \text{day}^{-1}.$

To control the pest outbreak, insecticide application is recommended only until the plant reaches the R6 growth stage, and provided that the aphid counts are at least $250$ per plant, at least $80\%$ plants are infested, and the population is increasing \citep{iowaSoybeanAphid,ragsdale2007economic}. The parameterization of the insecticide mortality rate is based on a study conducted at the South Dakota State University Volga Research Farm in 2017 \citep{dierks2019evaluating, menger2020implementation}. The active ingredient, product name, company, labeled rate, and mortality in percentage and rate are given in Table \ref{tab:2} (Supplementary \textcolor{blue}{S.1}).

\subsection*{S.2.2. Winter wheat (\textit{Triticum aestivium L.}) and bird cherry-oat aphid (\textit{Rhopalosiphum padi L.})}

For the wheat crop, the process of fertilization plays an important role. To parameterize the fertilization effect and how the Nitrogen-based fertilizer treatment for winter wheat impacts, we will use the findings of a field study conducted in northern Idaho and eastern Washington \citep{mahler1994nitrogen}.  The major finding of this study was the following: without nitrogen treatment, the yield was $3.35$ T/Ha (i.e, $K_0=3.35$), whereas when compared to when ammonium nitrate was applied either banded below the seed or surface-broadcast in the fall, the yields increased to $7.48$ T/Ha and $7.90$ T/Ha, respectively \citep{mahler1994nitrogen}. This indicates that the yield increased by approximately $123\%$, demonstrating the positive impact of fertilizers and timings on winter wheat production \citep{mahler1994nitrogen}.

For our analysis, we will use the wheat production data from Tennessee. In Tennessee, the average wheat yield is $4.764$ T/Ha, and over the 10 years of 2006-2016, it has ranged from $4.967$ to $3.403$ T/Ha ($50-73$ Bu/Acre) \citep{perkins2018impact}. So, on averaging and using the fact that winter crops are properly fertilized \citep{usda2023fertilizer}, we have total yield $K_0 + F = 4.185$ T/Ha. From the case study \citep{mahler1994nitrogen}, we know that, without the fertilizer, the yield loss would be $123\%$. Hence, we have $K_0=1.877$ T/Ha and $F=2.308$ T/Ha. To determine the growth rate of wheat crops, we will utilize a logistic growth model based on the following information: the maximum yield $K = 4.185$ T/Ha, the initial yield is 0.1 T/Ha, and the time required to reach near-maximum yield is 115 days. At $t = 115$ days, the yield reaches $99\%$ of the maximum, which is $4.1431$ T/Ha. The logistic growth model is expressed as: $Y(t) = \frac{K}{1 + Ae^{-rt}} \implies \quad r=0.0722 \hspace{.1in}\text{day}^{-1}$.

The pest, bird cherry-oat aphid (\textit{Rhopalosiphum padi L.}) grows exponentially and shares many similarities with soybean aphids. Therefore, the growth and mortality rates are assumed to be the same, i.e., $\beta=0.3466  \hspace{.1in} \text{day}^{-1}$ and $ \delta=0.05 \hspace{.1in}\text{day}^{-1}.$

To parameterize the efficacy of the insecticide treatment of aphids, we will use a case study conducted between January 31 and February 25. In this study, the researcher studies the foliar application of insecticide on the pest \citep{perkins2018impact}. The standard treatment involved applying Karate Z at a rate of $27.3$ g active ingredient per hectare (lambda-cyhalothrin, Syngenta Crop Protection). In some tests, Baythroid XL was used instead, at a rate of $13.2$ g active ingredient per hectare (beta-cyfluthrin, Bayer CropScience). This late-winter foliar insecticide application effectively reduced aphid populations by approximately $90\%$ \citep{perkins2018impact}.

\subsection*{S.2.3. Cabbage and Diamondback moth (\textit{Plutella xylostella})}

To parameterize the effect of fertilization on the cabbage, we will refer to an experiment conducted in the Ljubljansko polje, east of Ljubljana, the capital of Slovenia. In this study, researchers examined the impact of various fertilization and irrigation methods on the yield of white cabbage \citep{vsturm2010effect}. The study found that the yield in unfertilized control plots, using the farmers' irrigation practices, was $K_0 = 47$ T/Ha. When the farmers applied their fertilization practices in accordance with the Regulations on Integrated Production of Vegetables and the Technological Instructions for Vegetable Production \citep{MAFF2003, OGRS2002}, in addition to irrigation, the yield increased by $97.87\%$, reaching a total of $93$ T/Ha (i.e., $F=46$ T/Ha). On using a similar logistic type relation to calculate the growth rate of cabbage, we have $r = 0.0911 \hspace{.051in} \text{day}^{-1}$.

The growth rate of diamondback moth is $\beta=0.1477  \hspace{.051in} \text{day}^{-1}$ \citep{kahuthia2008development} and  $\delta=0.1 \hspace{.1in}\text{day}^{-1}$  \citep{al2019role}. To understand the effect and efficacy of multiple registered active ingredients in insecticides on diamondback moth, we used the findings of a field study conducted in Georgia and Florida \citep{dunn2024regional}. The active ingredient, product name, company, labeled rate, and mortality in percentage and rate are given in Table \eqref{tab:3}

We also evaluated the effectiveness of the baculovirus Plutella xylostella Granulovirus (PlxyGV) for managing this pest \citep{gryzwacz2002granulovirus, bhattacharyya2006pest}. In this study, we used research conducted in Kabete and Thika, near Nairobi \citep{bhattacharyya2006pest}. The findings indicated that the application of insecticide led to a significant decrease in the pest population, with reductions of approximately $83\%$ in shaded areas and $79\%$ in open areas \citep{bhattacharyya2006pest}. The active ingredient, product name, company, labeled rate, and mortality in percentage and rate are given in Table \ref{tab:3} (Supplementary \textcolor{blue}{S.1}).

\section*{S.3. Stability analysis and numerical simulations for Holling type II}

\subsection*{S.3.1. Equilibrium analysis}\label{sec:equi}
On setting 

\[ \dfrac{dx}{dt}=\dfrac{dy}{dt}=0, \]
we have

\begin{align*}
rx \left( 1 - \dfrac{x}{K_0 + F} \right) - \dfrac{\alpha x y}{1 + hx}=0 \qquad & \implies x \Big[ r\left( 1 - \dfrac{x}{K_0 + F} \right) - \dfrac{\alpha y}{1 + hx} \Big]=0 \\
\dfrac{\beta x y}{1 + hx} - \delta y - \gamma  y=0 & \implies y \Big[ \dfrac{\beta x }{1 + hx} - \delta  - \gamma \Big]=0.
\end{align*}

So, on setting $x=0$, we have $y=0$. Thus, the crop and pest extinction equilibrium is 

\[ \mathbf{E_1}:= (0,0).\]

On setting $y=0$, we have 
\[\left( 1 - \dfrac{x}{K_0 + F} \right) =0 \implies x= K_0 + F. \]
Thus, the pest extinction equilibrium is 
\[ \mathbf{E_2}:= (K_0+F,0).\] 

And on solving 

\[ r\left( 1 - \dfrac{x}{K_0 + F} \right) - \dfrac{\alpha y}{1 + hx} = 0 \quad \& \quad \dfrac{\beta x }{1 + hx} - \delta  - \gamma = 0, \]
we get a co-existence equilibrium 

\begin{align}\label{co_exist}\tag{Eq S.1.1}
\mathbf{E_3}:= \Big(\frac{\gamma +\delta }{\beta -h (\gamma +\delta )}, \frac{\beta  r \left(-\gamma -\delta +F (\beta -h (\gamma +\delta ))+K_0 (\beta -h (\gamma +\delta ))\right)}{\alpha  \left(F+K_0\right) (\beta -h (\gamma +\delta ))^2} \Big)
\end{align}

Hence, we have three possible equilibria corresponding to the system (1):

\begin{enumerate}
\item \textbf{Extinction:} $\mathbf{E_1}=(x^*,y^*):= (0,0),$
\item \textbf{Pest Extinction:} $\mathbf{E_2}=(x^*,y^*):= (K_0+F,0),$
\item \textbf{Co-existence:} $\mathbf{E_3}=(x^*,y^*):= \Big(\frac{\gamma +\delta }{\beta -h (\gamma +\delta )}, \frac{\beta  r \left(-\gamma -\delta +F (\beta -h (\gamma +\delta ))+K_0 (\beta -h (\gamma +\delta ))\right)}{\alpha  \left(F+K_0\right) (\beta -h (\gamma +\delta ))^2} \Big)$
\end{enumerate}
\noindent \textbf{To make sure the equilibria are ecologically relevant, we will always assume that $\beta > h (\gamma +\delta )$.}

\subsection*{S.3.2. Boundedness and positivity}\label{sec:bdd_pst}

\begin{theorem}\label{thm:bdd_pst}
The solution $(x(t),y(t))$ of system (1), with positive initial values $(x(0)>0,y(0)>0)$, are positive and bounded for all $t\ge0.$
\end{theorem}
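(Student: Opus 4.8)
The plan is to establish the two assertions in sequence: first positivity (equivalently, invariance of the open first quadrant), then a uniform a priori bound, after which global existence on $[0,\infty)$ follows from the standard continuation theorem. Since the right-hand side of system (1) is continuously differentiable on a neighborhood of the closed first quadrant (the denominator $1+hx$ satisfies $1+hx\ge 1>0$ for $x\ge 0$), local existence and uniqueness are immediate; the real content is to show that the local solution can neither leave the first quadrant nor blow up in finite time.

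For positivity I would exploit the multiplicative structure of both equations. Writing the first as $\dot x = x\,g(x,y)$ and the second as $\dot y = y\,h(x,y)$, with $g,h$ continuous along any trajectory, the solutions admit the representations $x(t)=x(0)\exp\!\big(\int_0^t g(x(s),y(s))\,ds\big)$ and $y(t)=y(0)\exp\!\big(\int_0^t h(x(s),y(s))\,ds\big)$. Because the exponential factors are strictly positive and $x(0),y(0)>0$, both components remain strictly positive on the whole maximal interval of existence; equivalently, the coordinate axes are invariant, so no interior trajectory can cross them.

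Boundedness of $x$ is then obtained by a logistic comparison: since $x,y\ge 0$ make the predation term nonnegative, the first equation gives $\dot x \le rx\big(1-\tfrac{x}{K_0+F}\big)$, whence the scalar comparison principle yields $x(t)\le\max\{x(0),K_0+F\}$ for all $t$ and $\limsup_{t\to\infty}x(t)\le K_0+F$. Boundedness of $y$ is the genuine obstacle, because the pest equation has no self-limiting term. I would resolve it with an auxiliary functional $W=\beta x+\alpha y$: differentiating along trajectories, the two Holling interaction terms cancel exactly (the $-\beta\tfrac{\alpha xy}{1+hx}$ coming from $\beta\dot x$ and the $+\alpha\tfrac{\beta xy}{1+hx}$ coming from $\alpha\dot y$ sum to zero), leaving $\dot W=\beta rx(1-\tfrac{x}{K_0+F})-\alpha(\delta+\gamma)y$. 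Adding $(\delta+\gamma)W$ eliminates the $y$ terms and produces $\dot W+(\delta+\gamma)W=\beta rx(1-\tfrac{x}{K_0+F})+(\delta+\gamma)\beta x$, whose right-hand side is a downward parabola in $x\ge 0$ and hence bounded above by an explicit constant $C$.

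Finally, from $\dot W+(\delta+\gamma)W\le C$ a Gr\"onwall/comparison estimate gives $W(t)\le\max\{W(0),\,C/(\delta+\gamma)\}$ for all $t\ge 0$; since $W=\beta x+\alpha y$ with both summands nonnegative, this bounds $y$ (by $W/\alpha$) as well. The resulting uniform a priori bound on $(x,y)$ precludes finite-time blow-up, so by the continuation theorem the solution extends to all $t\ge 0$, completing both claims. I expect the only delicate step to be the choice of the weighting $\beta x+\alpha y$ that forces the interaction terms to cancel, together with the choice of the additive constant $\delta+\gamma$; everything else is routine comparison and Gr\"onwall estimation.
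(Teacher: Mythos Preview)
Your proposal is correct and follows essentially the same route as the paper: positivity via the exponential representation $x(t)=x(0)\exp(\cdot)$, $y(t)=y(0)\exp(\cdot)$, and boundedness via the weighted sum $W=\beta x+\alpha y$ that makes the Holling interaction terms cancel, followed by a linear differential inequality for $W$. The only cosmetic difference is that you add $(\delta+\gamma)W$ to kill the $y$-term exactly and then bound the resulting quadratic in $x$, whereas the paper first invokes the logistic bound on $x$ and uses the cruder constant $\min\{\delta+\gamma,r\}$; your version is slightly cleaner and additionally makes the continuation argument explicit, but the underlying idea is identical.
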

\begin{proof}
\textbf{Positivity:} We can rewrite the system (1), 
\begin{align*}
	\dfrac{dx}{dt} &= x  \Big[ r\left( 1 - \dfrac{x}{K_0 + F} \right) - \dfrac{\alpha y}{1 + hx} \Big] \equiv x f(x,y), \\[12pt]
	\dfrac{dy}{dt} &=  y \Big[ \dfrac{\beta x}{1 + hx} - \delta - \gamma \Big] \equiv x g(x,y)
\end{align*}

Since, the initial conditions are positive $x(0)>0$ and $y(0)>0$, the solution of system (1) are defined 

\[
\displaystyle
x(t) = x(0) e^{ \int_0^t f(x(s), y(s))\, ds} > 0,
\]
and 
\begin{align*}
	y(t) = y(0) e^{\int_0^t g(x(s),y(s)) ds} >0.
\end{align*}
Hence, the solutions are positive.

\noindent\textbf{Boundedness:}Let's define the function, $w(t)=\beta x + \alpha y$. On calculating, 
\begin{align*}
	\dfrac{dw}{dt} &= \beta \dfrac{dx}{dt} + \alpha \dfrac{dy}{dt} \\
	&= \beta  \Big[ rx \left( 1 - \dfrac{x}{K_0 + F} \right) - \dfrac{\alpha x y}{1 + hx} \Big] + \alpha \Big[ \dfrac{\beta x y}{1 + hx} - \delta y - \gamma  y \Big] \\
\end{align*}
Since, the prey follows logistic growth, we have $x(t) \le K_0 + F + \epsilon$ for any $\epsilon>0$ and  sufficiently large t\citep{murray2007mathematical}. Hence
\begin{align*}
	\dfrac{dw}{dt} &\le \beta  (K_0 + F +\epsilon) - \alpha (\delta + \gamma) y \\
	& \le 2 \beta (K_0 + F +\epsilon) - \min  \{ \delta + \gamma, r \} w.
\end{align*}
So, \[ w \le \dfrac{2 \beta (K_0 + F +\epsilon)}{\min \{ \delta + \gamma, r \}},\]
for $t$ large enough, hence $w$ is bounded. Since, $w(t)=\beta x + \alpha y$, so the solution $(x(t),y(t))$ of system (1) is bounded.
\end{proof}
\subsection*{S.3.3. Stability analysis}\label{sec:stab}

\begin{theorem}\label{thm_extinction}
The equilibrium $\mathbf{E_1}:=(0,0)$ is always unstable.
\end{theorem}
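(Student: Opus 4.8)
The plan is to show that the extinction equilibrium $\mathbf{E_1}=(0,0)$ is a saddle (hence unstable) by computing the Jacobian of system (1) and evaluating it there. First I would write the vector field as $\dot{x}=P(x,y)$ and $\dot{y}=Q(x,y)$ with
\[
P(x,y)=rx\left(1-\frac{x}{K_0+F}\right)-\frac{\alpha x y}{1+hx}, \qquad Q(x,y)=\frac{\beta x y}{1+hx}-\delta y-\gamma y,
\]
and form the Jacobian matrix $J(x,y)=\begin{pmatrix} P_x & P_y \\ Q_x & Q_y \end{pmatrix}$. Since both $P$ and $Q$ have $x$ or $y$ as common factors in every term, the partial derivatives simplify dramatically at the origin: the cross terms $P_y$ and $Q_x$ vanish at $(0,0)$ because they carry an overall factor of $x$ or $y$, so the Jacobian there is diagonal.

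Next I would evaluate the diagonal entries explicitly. Differentiating $P$ with respect to $x$ and setting $(x,y)=(0,0)$ gives $P_x(0,0)=r$, coming from the linear part of the logistic growth term; differentiating $Q$ with respect to $y$ and evaluating at the origin gives $Q_y(0,0)=-\delta-\gamma$. Thus
\[
J(0,0)=\begin{pmatrix} r & 0 \\ 0 & -(\delta+\gamma) \end{pmatrix},
\]
whose eigenvalues are simply $\lambda_1=r>0$ and $\lambda_2=-(\delta+\gamma)<0$. Because the two eigenvalues have opposite sign, the origin is a hyperbolic saddle point, and by the standard linearization (Hartman–Grobman) theorem it is unstable regardless of the remaining parameter values. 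I would note that $\lambda_1=r>0$ reflects the biological fact that a small crop population in the absence of pests grows, which alone forces instability.

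This argument is essentially a routine linearization, so there is no serious analytic obstacle; the only point requiring a little care is verifying that the off-diagonal entries genuinely vanish at the origin rather than contributing to the eigenvalues. This follows because every term in $P_y$ is proportional to $x$ (the predation term $-\alpha x y/(1+hx)$ differentiated in $y$ leaves a factor $x$) and every term in $Q_x$ is proportional to $y$, both of which are zero at $(0,0)$. Hence the diagonal structure is exact, the positive eigenvalue $r$ is unconditional, and the conclusion that $\mathbf{E_1}$ is always unstable holds for all admissible parameter values without any further restriction.
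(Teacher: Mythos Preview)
Your proof is correct and follows essentially the same approach as the paper: compute the Jacobian at $(0,0)$, read off the eigenvalues $\lambda_1=r>0$ and $\lambda_2=-(\delta+\gamma)<0$, and conclude that the origin is a saddle and hence unstable. Your write-up is in fact more careful than the paper's, since you explicitly justify why the off-diagonal entries vanish and invoke Hartman--Grobman, whereas the paper simply displays the diagonal Jacobian and states the eigenvalues.
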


\begin{proof}
\begin{align*}
	J(\mathbf{E_1}):=\left(
	\begin{array}{cc}
		r & 0 \\
		0 & -\gamma -\delta  \\
	\end{array}
	\right)
\end{align*}

The eigenvalues are 
\begin{align*}
	\lambda_1 &= r\\
	\lambda_2 &= -\gamma I -\delta \\
\end{align*}
It is unstable and saddle since the eigenvalue $\lambda_{1}$ is always positive, whereas $\lambda_2$ is always negative (since the parameters $I,\delta\ge 0)$\citep{perko2013differential}.
\end{proof}

\begin{theorem}\label{thm_pest_extinction}
The pest extinction equilibrium $\mathbf{E_2}=(K_0+F,0)$ is stable if 
\[ \beta (K_0+F) < \delta + \gamma.\]
\end{theorem}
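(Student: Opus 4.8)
The plan is to establish local asymptotic stability of $\mathbf{E_2}$ by linear stability analysis, exactly mirroring the Jacobian argument already used for $\mathbf{E_1}$ in Theorem \ref{thm_extinction}. First I would compute the Jacobian of system \eqref{eq:study_holling_type2} at an arbitrary point. Writing the two right-hand sides as $f(x,y)$ and $g(x,y)$, the only nonroutine ingredient is the derivative $\frac{\partial}{\partial x}\!\left(\frac{x}{1+hx}\right)=\frac{1}{(1+hx)^2}$, which appears in both $\partial_x f$ and $\partial_x g$; every other entry is immediate.

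Next I would evaluate the Jacobian at $\mathbf{E_2}=(K_0+F,0)$. The key simplification is that the pest density vanishes there, so the lower-left entry $\partial_x g=\beta y^*/(1+hx^*)^2$ is zero and the matrix is upper triangular:
\[
J(\mathbf{E_2})=\left(
\begin{array}{cc}
-r & -\dfrac{\alpha (K_0+F)}{1+h(K_0+F)} \\[8pt]
0 & \dfrac{\beta (K_0+F)}{1+h(K_0+F)}-(\delta+\gamma)
\end{array}
\right).
\]
Its eigenvalues are therefore just the diagonal entries, $\lambda_1=-r$ and $\lambda_2=\frac{\beta (K_0+F)}{1+h(K_0+F)}-(\delta+\gamma)$. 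Since $r>0$, the eigenvalue $\lambda_1$ is always negative, so stability reduces entirely to controlling the sign of $\lambda_2$.

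The final step is to deduce $\lambda_2<0$ from the stated hypothesis. The sharp threshold is $\frac{\beta (K_0+F)}{1+h(K_0+F)}<\delta+\gamma$; because $1+h(K_0+F)>1$ for $h>0$, the cleaner product condition $\beta(K_0+F)<\delta+\gamma$ assumed in the statement implies it, so $\lambda_2<0$ follows at once. With both eigenvalues negative, $\mathbf{E_2}$ is a locally asymptotically stable node \citep{perko2013differential}, which is the claim. The computation is essentially mechanical, so there is no serious obstacle; the only point worth flagging is that the handling-time denominator $1+h(K_0+F)$ renders the hypothesis sufficient but not necessary, and one should make clear that this is the intended (conservative) stability criterion rather than the exact one.
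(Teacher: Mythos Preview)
Your proposal is correct and follows essentially the same route as the paper: compute the Jacobian at $\mathbf{E_2}$, observe it is upper triangular with eigenvalues $\lambda_1=-r$ and $\lambda_2=\frac{\beta(K_0+F)}{1+h(K_0+F)}-(\delta+\gamma)$, and then use $1+h(K_0+F)>1$ to conclude that the hypothesis $\beta(K_0+F)<\delta+\gamma$ forces $\lambda_2<0$. Your remark that the stated condition is sufficient but not sharp is exactly the point the paper makes implicitly by giving the sharp instability threshold in the very next theorem.
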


\begin{proof}

\begin{align*}
	J(\mathbf{E_2}):&=\left(
	\begin{array}{cc}
		r-\dfrac{2 r x^*}{F+K_0} & -\dfrac{\alpha  x^*}{h x^*+1} \\[14pt]
		0 & -\gamma -\delta +\dfrac{\beta  x^*}{h x^*+1} \\
	\end{array}
	\right)
\end{align*}

Since, $x^*=F+K_0$, we have

\begin{align*}
	J(\mathbf{E_2}):&=\left(
	\begin{array}{cc}
		-r & -\dfrac{\alpha  \left(F+K_0\right)}{h \left(F+K_0\right)+1} \\[14pt]
		0 & -\gamma -\delta +\dfrac{\beta  \left(F+K_0\right)}{h \left(F+K_0\right)+1} \\
	\end{array}
	\right)
\end{align*}

The eigenvalues are 
\begin{align*}
	\lambda_1 &= -r\\
	\lambda_2 &= -\gamma -\delta +\dfrac{\beta  \left(F+K_0\right)}{h \left(F+K_0\right)+1}  \\
\end{align*}

As per the assumption, we have $\beta (K_0+F) < \delta + \gamma$, and using the density of real number, we have

\begin{align*}
	\dfrac{\beta  \left(F+K_0\right)}{h \left(F+K_0\right)+1} < \beta  \left(F+K_0\right) <\gamma +\delta \implies \lambda_2<0.
\end{align*}

Since $r$, being the growth rate, is always positive, $\lambda_1<0$. And, from our assumption, $\lambda_2<0$. Hence, the equilibrium $\mathbf{E_2}$ is locally stable\citep{perko2013differential}.
\end{proof}

Similarly, we can find the parametric restriction for which the $\mathbf{E_2}$ is unstable.
\begin{theorem}
The pest extinction equilibrium $\mathbf{E_2}$ is unstable if 
\[ \frac{\beta  (K_0+F)}{h (K_0+F)+1} > \delta + \gamma.\]
\end{theorem}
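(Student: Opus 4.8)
The plan is to reuse, verbatim, the Jacobian $J(\mathbf{E_2})$ already computed in the proof of Theorem \ref{thm_pest_extinction} and to repeat the eigenvalue sign analysis carried out there, but now under the reversed inequality. The key structural fact is that the Jacobian at $\mathbf{E_2}$ is upper triangular: the lower-left entry vanishes because $dy/dt$ carries an overall factor of $y$, which is zero at the pest-free state. Consequently the eigenvalues are precisely the diagonal entries, $\lambda_1 = -r$ and $\lambda_2 = -\gamma - \delta + \tfrac{\beta(F+K_0)}{h(F+K_0)+1}$, and no characteristic-polynomial computation is required.

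First I would record that $\lambda_1 = -r < 0$ always, since $r$ is a positive growth rate; this eigenvalue is irrelevant to instability and the decisive quantity is $\lambda_2$. Next I would substitute the hypothesis $\tfrac{\beta(K_0+F)}{h(K_0+F)+1} > \delta + \gamma$ directly into the expression for $\lambda_2$, which immediately yields $\lambda_2 > 0$. Having one strictly positive and one strictly negative eigenvalue, $\mathbf{E_2}$ is a hyperbolic saddle point, and by the standard linearization (Hartman--Grobman) criterion \citep{perko2013differential} the equilibrium is unstable for the full nonlinear system (1).

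There is essentially no obstacle here, since the statement is the logical complement of Theorem \ref{thm_pest_extinction}: the only point worth flagging is that the \emph{strict} inequality in the hypothesis produces a \emph{strict} sign for $\lambda_2$, so the linearization is nondegenerate and the saddle conclusion transfers cleanly to the nonlinear dynamics. For completeness I would also note the ecological reading: the condition says that the pest's per-capita net growth rate, evaluated at the (fertilized) crop carrying capacity $x^* = K_0 + F$, exceeds its combined natural and insecticide-induced mortality $\delta + \gamma$; hence any small introduction of pests grows initially, and the pest-free state cannot persist. This is exactly the threshold at which fertilizer-driven enrichment opens the door to pest invasion, complementing the stabilizing role of $\gamma$ established elsewhere in the analysis.
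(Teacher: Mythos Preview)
Your proposal is correct and matches the paper's approach exactly: the paper does not write out a separate proof but simply remarks ``Similarly, we can find the parametric restriction for which the $\mathbf{E_2}$ is unstable,'' deferring to the same Jacobian and eigenvalue computation you describe. Your observation that $\lambda_2>0$ under the stated hypothesis (yielding a saddle) is precisely the intended argument.
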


\begin{theorem}\label{thm_global_extinction}
The system $\mathbf{E_2}=(K_0+F,0)$ is globally asymptotically stable if 
\[\beta + \beta (K_0 +F)<\delta + \gamma.\]
\end{theorem}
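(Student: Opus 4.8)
The plan is to establish global asymptotic stability in two stages: first drive the pest to extinction using a scalar differential inequality, then show the crop relaxes to its carrying capacity via a comparison argument. The hypothesis $\beta + \beta(K_0+F) < \delta+\gamma$ is precisely what is needed to make the pest's per-capita growth rate eventually negative, with the lone $\beta$ term supplying the slack that absorbs the transient overshoot of the crop above $K_0+F$.

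First I would invoke the boundedness result (Theorem \ref{thm:bdd_pst}) together with the logistic structure of the $x$-equation to conclude that for every $\epsilon>0$ there is a time $T$ beyond which $x(t)\le K_0+F+\epsilon$. Using the elementary bound $\frac{\beta x}{1+hx}\le \beta x$ (valid since $1+hx\ge 1$), the pest equation yields
\begin{equation*}
\frac{1}{y}\frac{dy}{dt}=\frac{\beta x}{1+hx}-\delta-\gamma\le \beta(K_0+F+\epsilon)-\delta-\gamma
\end{equation*}
for $t\ge T$. Choosing $\epsilon\in(0,1)$, the hypothesis gives $\beta(K_0+F+\epsilon)<\beta(1+K_0+F)<\delta+\gamma$, so the right-hand side is a negative constant $-c$ with $c>0$. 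A Gronwall/comparison estimate then delivers $y(t)\le y(T)e^{-c(t-T)}\to 0$, i.e. the pest goes extinct exponentially. Note that the handling time $h$ drops out of the crude bound, consistent with its absence from the stated condition.

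Second, with $y(t)\to 0$ the crop equation becomes asymptotically autonomous with limiting logistic dynamics $\dot x = rx(1-x/(K_0+F))$. To make the convergence $x\to K_0+F$ rigorous I would sandwich $x$ between two scalar logistic comparison equations. The upper bound $\dot x\le rx(1-x/(K_0+F))$ gives $\limsup_t x(t)\le K_0+F$. For the lower bound, fix $\eta>0$; since $y\to 0$ the predation loss satisfies $\alpha y/(1+hx)\le \eta$ eventually, so $\dot x\ge x\big(r-\eta-rx/(K_0+F)\big)$, whose positive equilibrium $(K_0+F)(1-\eta/r)$ is globally attracting for positive data. Hence $\liminf_t x(t)\ge (K_0+F)(1-\eta/r)$, and letting $\eta\to 0$ yields $\liminf_t x(t)\ge K_0+F$. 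Combining the two bounds forces $x(t)\to K_0+F$; positivity (Theorem \ref{thm:bdd_pst}) guarantees $x$ never collapses to zero, so the nontrivial comparison equilibrium is the one selected.

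Finally I would observe that the same hypothesis implies $\frac{\beta(K_0+F)}{1+h(K_0+F)}<\beta(K_0+F)<\delta+\gamma$, which is the local stability condition of Theorem \ref{thm_pest_extinction}; combining that local asymptotic stability with the global attraction just established upgrades $\mathbf{E_2}$ to global asymptotic stability. The main obstacle is the second stage: the $y\to 0$ estimate is immediate, but making the crop's convergence rigorous requires either the two-sided logistic comparison sketched above or an appeal to the theory of asymptotically autonomous systems, and one must argue carefully that $x$ stays bounded away from $0$ so that the limiting dynamics converge to $K_0+F$ rather than to the trivial equilibrium.
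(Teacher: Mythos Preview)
Your proof is correct and takes a genuinely different route from the paper's. The paper constructs the Lyapunov function
\[
V(x,y)=\tfrac{1}{2}\bigl(x-(K_0+F)\bigr)^2+\tfrac{1}{2}y^2+y
\]
on the positive quadrant and argues that $\dot V<0$ off $\mathbf{E_2}$ under the stated hypothesis, treating both coordinates at once. Your two-stage comparison argument instead decouples the problem: a scalar Gronwall inequality forces $y\to 0$ exponentially (with the lone $\beta$ in the hypothesis absorbing the transient overshoot of $x$ past $K_0+F$, exactly as you identify), after which a two-sided logistic sandwich drives $x\to K_0+F$. The Lyapunov route is shorter once the candidate $V$ is written down, but the choice of $V$ is somewhat ad hoc and the intermediate bounds on the cross term $-\alpha x(x-(K_0+F))y/(1+hx)$ require care. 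Your approach is more mechanistic---it makes transparent \emph{why} the condition $\beta(1+K_0+F)<\delta+\gamma$ suffices and where each piece is spent---and the asymptotically-autonomous/comparison machinery you use for the $x$-convergence would carry over more readily to other functional responses where a clean quadratic Lyapunov function may not be available.
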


\begin{proof}
Let $\Omega:=\{ (x,y) \in \mathbb{R}^2 : x>0,y>0\}=\mathbb{R}^2_{+}$, and define
\[ V: \mathbb{R}^2_{+} \to \mathbb{R} \quad \text{such that} \quad V(x,y)=\dfrac{1}{2} (x-(F+K_0))^2) + \dfrac{1}{2} y^2 + y.\]

We want to use the Lyapunov function $V(x,y)$ to show global stability for $\mathbf{E_2}$. On calculating,
\begin{align*}
	\dfrac{dV}{dt} & = ((x-(F+K_0)) \dfrac{dx}{dt} + y \dfrac{dy}{dt} + \dfrac{dy}{dt} \\
	& = (x-(F+K_0)) \Big[ rx \left( 1 - \dfrac{x}{K_0 + F} \right) - \dfrac{\alpha x y}{1 + hx} \Big] + y \Big[ \dfrac{\beta x y}{1 + hx} - \delta y - \gamma  y \Big] + \Big[ \dfrac{\beta x y}{1 + hx} - \delta y - \gamma  y \Big] \\
	& = -rx (x-(F+K_0))^2 - \dfrac{\alpha x (x-(F+K_0)) y}{1+hx} + \dfrac{\beta x y^2}{1+hx} -(\delta+\gamma) y^2 + \dfrac{\beta x y}{1+h x} - (\delta + \gamma) y \\
	& \le  \alpha (F+K_0) y + \beta y^2 -(\delta+\gamma) y^2 + \beta  y - (\delta + \gamma) y \\
	&= y \Big( \alpha (F+K_0) + \beta - (\delta + \gamma) \Big) + y^2 \Big(\beta - (\delta + \gamma) \Big).
\end{align*}

From the assumption, we have 
\[ \beta + \beta (K_0 +F)<\delta + \gamma \quad \& \beta <\delta + \gamma.\]

Hence, the $\dfrac{dV}{dt}<0$ in $\Omega$, which gives $\mathbf{E_2}$ is globally stable\citep{perko2013differential}.
\end{proof}

\begin{theorem}\label{thm_co-existence}
The co-existence equilibrium $\mathbf{E_3}$ is locally stable if 
\[r<\dfrac{2 r x^*}{F+K_0}+\dfrac{\alpha  y^*}{(h x^* +1)^2}.\]

\end{theorem}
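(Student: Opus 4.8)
The plan is to prove local asymptotic stability by linearizing system~(1) at the coexistence equilibrium $\mathbf{E_3}=(x^*,y^*)$ and applying the trace--determinant (Routh--Hurwitz) criterion for planar systems, which guarantees local asymptotic stability whenever $\det J(\mathbf{E_3})>0$ and $\mathrm{tr}\,J(\mathbf{E_3})<0$. First I would compute the Jacobian. Writing the right--hand sides as $P(x,y)=rx(1-x/(K_0+F))-\alpha xy/(1+hx)$ and $Q(x,y)=\beta xy/(1+hx)-(\delta+\gamma)y$, the only nontrivial derivatives come from the Holling term; the quotient rule gives $\partial_x[\alpha xy/(1+hx)]=\alpha y/(1+hx)^2$, because the $hx$ contributions cancel. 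This yields a $(1,1)$ entry equal to $r-2rx^*/(K_0+F)-\alpha y^*/(1+hx^*)^2$, off--diagonal entries $-\alpha x^*/(1+hx^*)$ and $\beta y^*/(1+hx^*)^2$, and a $(2,2)$ entry $\beta x^*/(1+hx^*)-\delta-\gamma$.

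The pivotal simplification is to invoke the equilibrium relation $Q(x^*,y^*)=0$. Since $y^*>0$, this forces $\beta x^*/(1+hx^*)=\delta+\gamma$, so the $(2,2)$ entry vanishes identically at $\mathbf{E_3}$. Consequently the trace collapses to the single $(1,1)$ entry, while the determinant reduces to minus the product of the off--diagonal entries, namely $\det J(\mathbf{E_3})=\alpha\beta x^* y^*/(1+hx^*)^3$.

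Next I would show this determinant is strictly positive. Positivity of $x^*$ and $y^*$ follows from Theorem~\ref{thm:bdd_pst} together with the standing assumption $\beta>h(\gamma+\delta)$, and the parameters $\alpha,\beta,h$ are positive by definition; hence $\det J(\mathbf{E_3})>0$ holds unconditionally. The Routh--Hurwitz condition therefore reduces to requiring $\mathrm{tr}\,J(\mathbf{E_3})<0$, that is $r-2rx^*/(K_0+F)-\alpha y^*/(1+hx^*)^2<0$. Rearranging yields exactly the stated hypothesis $r<2rx^*/(K_0+F)+\alpha y^*/(1+hx^*)^2$, completing the argument.

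There is no deep obstacle here; this is a routine linearization. The one step demanding care is recognizing that the equilibrium identity annihilates the $(2,2)$ entry: without this observation, both the trace and the determinant would appear cumbersome and their equivalence to the stated inequality would be obscured. I would therefore foreground the vanishing of the $(2,2)$ entry before evaluating the trace and determinant, so that the theorem's condition emerges transparently as the negativity of the trace, with the positive determinant handled automatically.
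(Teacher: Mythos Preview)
Your proposal is correct and follows essentially the same route as the paper: linearize at $\mathbf{E_3}$, exploit the equilibrium identity $\beta x^*/(1+hx^*)=\delta+\gamma$ to annihilate the $(2,2)$ entry, and then apply the trace--determinant criterion, with the determinant automatically positive and the trace condition reducing to the stated inequality. One minor correction: positivity of $x^*,y^*$ is not a consequence of Theorem~\ref{thm:bdd_pst} (which concerns trajectories, not equilibria) but rather of the explicit formula \eqref{co_exist} together with the standing assumption $\beta>h(\gamma+\delta)$; the paper simply invokes ``the fact that $x^*,y^*>0$.''
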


\begin{proof}
\begin{align*}
	J(\mathbf{E_3}):=\left(
	\begin{array}{cc}
		-\dfrac{2 r x^*}{F+K_0}-\dfrac{\alpha  y^*}{(h x^* +1)^2}+r & -\dfrac{\alpha  x^* }{h x^* +1} \\[14pt]
		\dfrac{\beta  y^*}{(h x^* +1)^2} & -\gamma -\delta +\dfrac{\beta  x^* }{h x^* +1} \\
	\end{array}
	\right)
\end{align*}

Since, $x^*$ and $y^*$ solve the following equations, 

\[ r\left( 1 - \dfrac{x^*}{K_0 + F} \right) - \dfrac{\alpha y^*}{1 + hx^*} = 0 \quad \& \quad \dfrac{\beta x^* }{1 + hx^*} - \delta  - \gamma = 0, \]

we have 

\begin{align*}
	J(\mathbf{E_3}):=\left(
	\begin{array}{cc}
		-\dfrac{2 r x^*}{F+K_0}-\dfrac{\alpha  y^*}{(h x^* +1)^2}+r & -\dfrac{\alpha  (\gamma +\delta )}{\beta } \\[14pt]
		\dfrac{\beta  y^*}{(h x^* +1)^2} & 0 \\
	\end{array}
	\right)
\end{align*}

The co-existence equilibrium $\mathbf{E_3}$ is stable if $\det (J(\mathbf{E_3})>0$ and $\mathrm{Tr} (J(\mathbf{E_3}))<0$, where
\begin{align*}
	\mathrm{Tr} (J(\mathbf{E_3})) &= -\dfrac{2 r x^*}{F+K_0}-\dfrac{\alpha  y^*}{(h x^* +1)^2}+r  \\
	\det (J(\mathbf{E_3}) &= \dfrac{\alpha  (\gamma +\delta )}{\beta } \dfrac{\beta  y^*}{(h x^* +1)^2}
\end{align*}

From the positivity of parameters and using the fact that $x^*,y^*>0$, we have $\det (J(\mathbf{E_3})>0.$ Under the assumption 

\[ r<\dfrac{2 r x^*}{F+K_0}+\dfrac{\alpha  y^*}{(h x^* +1)^2},\]

we have $\mathrm{Tr} (J(\mathbf{E_3}))<0$, hence $\lambda_{1,2}$ are negative. Hence, the co-existence equilibrium $\mathbf{E_3}$ is locally stable.
\end{proof}

\section*{S.4. Stability analysis and numerical simulations for Holling type III}\label{sec:holling_type3}
Let $(x)$ and $(y)$ denote the crop and pest in the agroecosystems. In our system, a crop $(x)$ follows a logistic growth with a growth rate $(r)$, and it can reach to carrying capacity $(K_0)$ without fertilizer. Using fertilizer results in an increased yield of $(F)$. The pest $(y)$ attacks the crop with rate $\alpha$, with a growth rate of $(\beta)$, and with a natural mortality rate of $(\delta)$. The farmer's use of insecticide results in insecticide-induced mortality ($\gamma$). The $\gamma$ is calculated as $-\text{ln(survival})/\text{time}$.  The details of the parameters and their units are in Table 1. The crop-pest interaction using the system of nonlinear ordinary differential equations (ODEs):

\begin{equation}\tag{S.1}
\begin{array}{l}
	\dfrac{dx}{dt} = rx \left( 1 - \dfrac{x}{K} \right) - \dfrac{\alpha x^k y}{1 + hx^k}, \\[12pt]
	\dfrac{dy}{dt} = \dfrac{\beta x^k y}{1 + hx^k} - \delta y - \gamma y.
\end{array}.
\label{eq:study_holling_type3}
\end{equation}

Additionally, we have defined a pulse version of the model \eqref{eq:study_holling_type3} that relies on a pest threshold, meaning that insecticide application is only necessary when the pest population exceeds this threshold. 
\begin{equation}\tag{S.2}
\begin{array}{ll}
	\dfrac{dx}{dt} = rx \left( 1 - \dfrac{x}{K_0 + m F} \right) - \dfrac{\alpha x^k y}{1 + hx^k}, \\[12pt]
	\dfrac{dy}{dt} = \dfrac{\beta x^k y}{1 + hx^k} - \delta y , \\[12pt]
	\text{If } y > y_{\text{thresh}} \text{ (threshold)}, \hspace{.05in} \text{then } y \rightarrow (1-\theta) y,
\end{array}.
\label{eq:study_holling_type3_control}
\end{equation}
where $\theta$ is the mortality (in percentage) caused by insecticide to the pest. The system \eqref{eq:study_holling_type3_control} helps us set a pest threshold, which triggers the insecticide spraying usage. 

The stability analysis is similar to what we did for Holling type II response\citep{murray2007mathematical,holling1959components,holling1959some,perko2013differential}.

\newpage

\begin{table}[H]
\centering
\caption{\textbf{Treatment composition and associated mortality rates at the SDSU Volga Research Farm in 2017.}}
\begin{adjustbox}{max width=\textwidth}
	\begin{tabular}{|c|p{4cm}|p{4cm}|p{2cm}|p{2cm}|p{2cm}|}
		\hline
		\textbf{Treatment} & \textbf{Product (Active Ingredients)} & \textbf{Company (Location)} & \textbf{Rate (mL/Ha)} & \textbf{Mortality Efficacy ($\theta$) (\%)} & \textbf{Mortality Rate ($\gamma=-\ln(1-\theta)$}\\
		\hline
		1 & Besiege\textsuperscript{\textregistered} (4.63\% lambda-cyhalothrin, 9.26\% chlorantraniliprole) & Syngenta Crop Protection, Inc., Greensboro, NC & 365 & 85 & 1.897\\
		\hline
		2 & Warrior II\textsuperscript{\textregistered} (22.8\% lambda-cyhalothrin) & Syngenta Crop Protection, Inc., Greensboro, NC & 70 & 53 & 0.755\\
		\hline
		3 & Lannate\textsuperscript{\textregistered} (29\% methomyl) & DuPont, Johnston, IA & 615 & 68 & 1.139 \\
		\hline
		4 & Leverage 360\textsuperscript{\textregistered} (21\% imidacloprid, 10.5\% beta-cyfluthrin) & Bayer Crop Science, Pittsburgh, PA & 204 & 33 & 0.4 \\
		\hline
	\end{tabular}
\end{adjustbox}
\label{tab:2}
\caption*{\footnotesize \\
	Summary of treatment formulations and observed mortality rates from field trials conducted at the South Dakota State University Volga Research Farm during the 2017 soybean growing season\citep{dierks2019evaluating,menger2020implementation}}
\end{table}

\newpage

\begin{table}[H]
\centering
\caption{\textbf{Insecticide products evaluated for efficacy against diamondback moth (Plutella xylostella) in Georgia and Florida.}}
\begin{adjustbox}{max width=\textwidth}
	\begin{tabular}{|p{4.5cm}|p{5.5cm}|p{2cm}|p{1cm}|p{1.5cm}|p{1.2cm}|}
		\hline
		\textbf{Active Ingredient} & \textbf{Product Name (Company)} & \textbf{Product ha$^{-1}$} & \textbf{ai ha$^{-1}$ (g)} & \textbf{Efficacy (\%)} & \textbf{Rate} \\ \hline
		Naled & Dibrom 8E (AMVAC Chemical Corp.) & 2340 ml & 2240 & 87.10 & 2.048\\ \hline
		\textit{B. thuringiensis} var. \textit{aizawai} & XenTari DF (Valent USA) & 1680 g & 907 & 68.40 & 1.152 \\ \hline
		Bifenthrin & Brigade 2EC (FMC Corp.) & 467 ml & 112 & 68.00 & 1.139\\ \hline
		\textit{B. thuringiensis} var. \textit{kurstaki} & DiPel DF (Valent USA) & 1120 g & 605 & 3.70 & 0.038 \\ \hline
		Tolfenpyrad & Torac 1.29EC (Nichino America) & 1530 ml & 237 & 70.10 & 1.204 \\ \hline
		Methomyl & Lannate 2.4LV (Corteva AgriScience) & 3510 ml & 1010 & 56.30 & 0.828 \\ \hline
		Spinetoram & Radiant 1SC (Corteva AgriScience) & 730 ml & 87.5 & 85.60 & 1.938 \\ \hline
		Indoxacarb & Avaunt 30WDG (FMC Corp.) & 256 g & 76.7 & 42.00 & 0.545 \\ \hline
		Chlorantraniliprole & Coragen 1.67SC (FMC Corp.) & 365 ml & 73.0 & 40.50 & 0.519 \\ \hline
		Cyclaniliprole & Harvanta 50SL (Summit Agro USA) & 1200 ml & 59.9 & 17.40 & 0.191 \\ \hline
		Cyantraniliprole & Exirel 0.83SC (FMC Corp.) & 986 ml & 98.1 & 37.90 & 0.476 \\ \hline
		Emamectin benzoate & Proclaim 5WDG (Syngenta Crop Protection) & 351 g & 15.5 & 88.70 & 2.18\\ \hline
	\end{tabular}
\end{adjustbox}
\caption*{\footnotesize \\
	List of insecticide products used in the maximum dose efficacy survey against field populations of Plutella xylostella from Georgia and Florida. Products are categorized by their active ingredients, company, rate, and efficacy determined through standardized bioassays to inform regional resistance patterns and pest management strategies\citep{dunn2024regional}.}
\label{tab:3}
\end{table}

\newpage

\begin{figure}[H]
\includegraphics[width=1\linewidth]{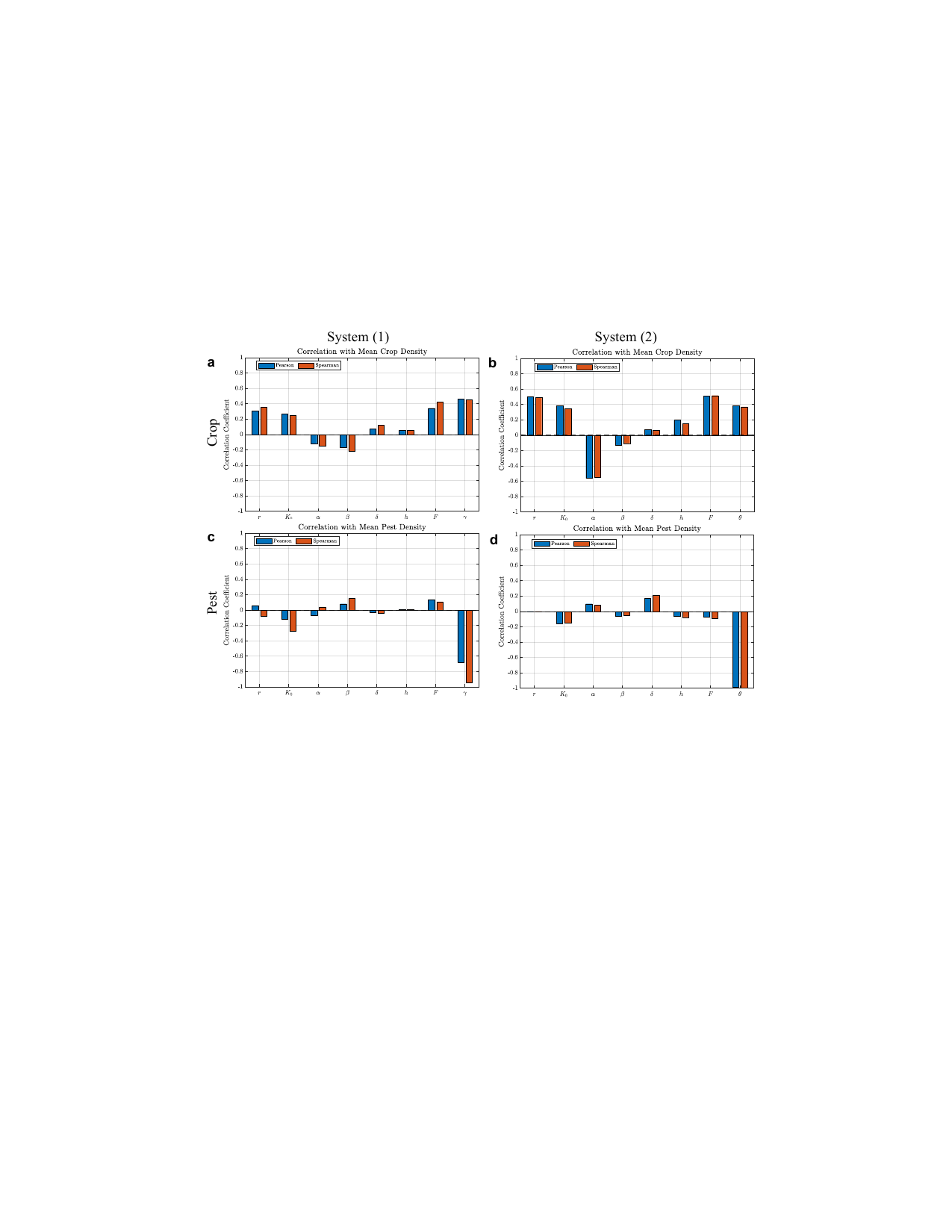}
\caption{\textbf{Sensitivity analysis of the soybean–aphid crop–pest system using Partial Rank Correlation Coefficients (PRCCs).}. The PRCC analysis examines the influence of key model parameters on two primary outputs for systems (1) and (2): \textbf{a} and \textbf{b}, Soybean yield. \textbf{c} and \textbf{d}, Aphid population. Parameters with higher absolute PRCC values exert greater influence on system behavior. Positive values indicate parameters that increase the output when increased, while negative values reflect inhibitory effects.  The system parameters are listed in Table (1).}
\label{fig:prcc}
\end{figure}

\newpage

\begin{figure}[H]
\includegraphics[width=1\linewidth]{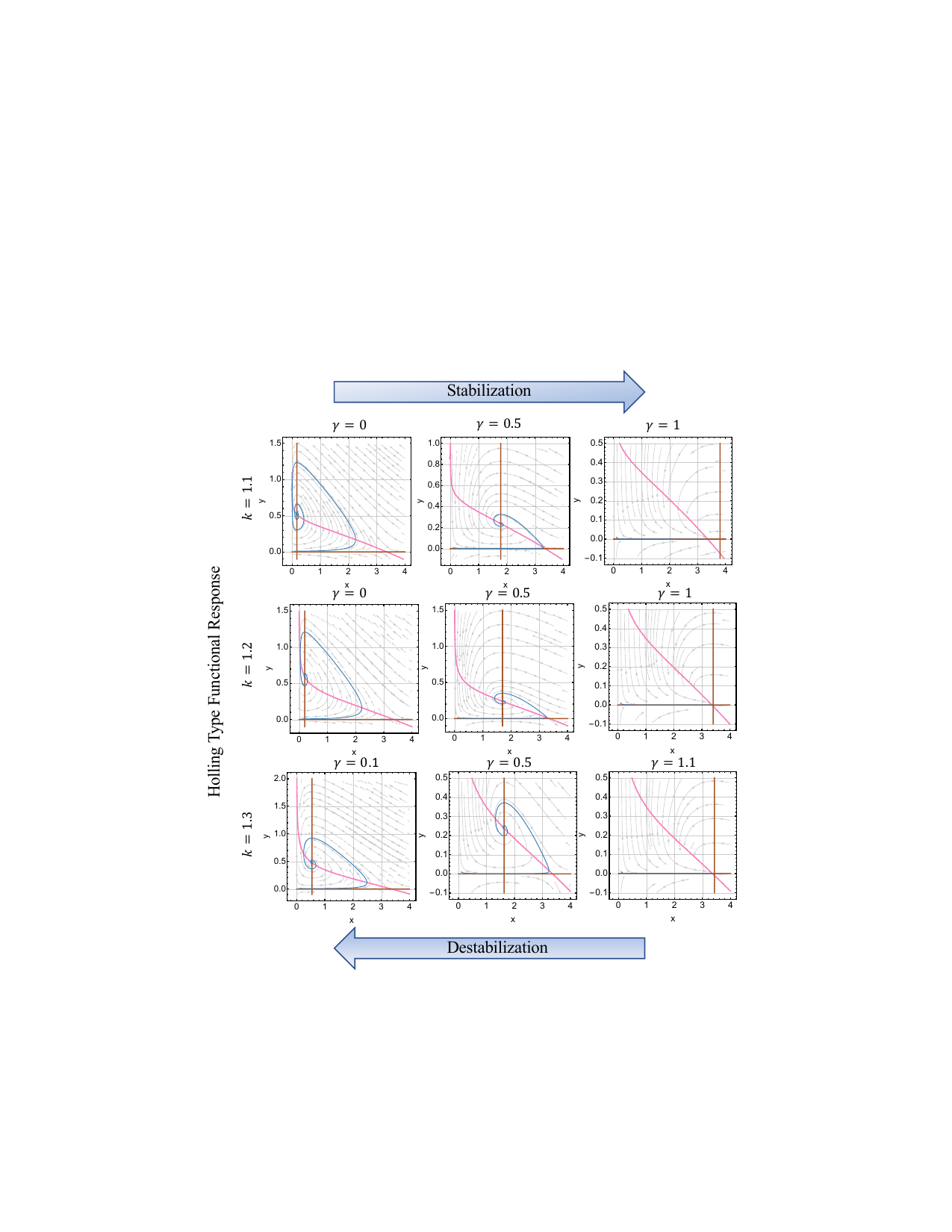}
\caption{\textbf{Phase diagram illustrating the impact of insecticide-induced mortality parameters $(\gamma)$ on the stabilization and destabilization of soybean–aphid systems for Holling type III functional response.} Phase diagram showing the insecticide-induced mortality parameter $(\gamma)$, which stabilizes the soybean-aphid system dynamics for Holling type III functional response: (1) $k=1.1$, (2) $k=1.2$, and (3) $k=1.3$. In the phase diagram, the magenta curve represents and the brown curve represents the nullcline, the intersection of nullclines represents equilibrium points, and the blue curve illustrates the population trajectory. The system parameters are listed in Table 1 and \eqref{tab:2}. (Note: A phase plot is a geometric visualization of the trajectories of an ordinary differential equation (ODE), showing how the system evolves over time from a given initial condition. A nullcline is a curve in the phase plane where the rate of change of one of the variables in the ODE is zero.) }
\label{fig:type3_phase}
\end{figure}

\newpage
\begin{figure}[H]
\includegraphics[width=1\linewidth]{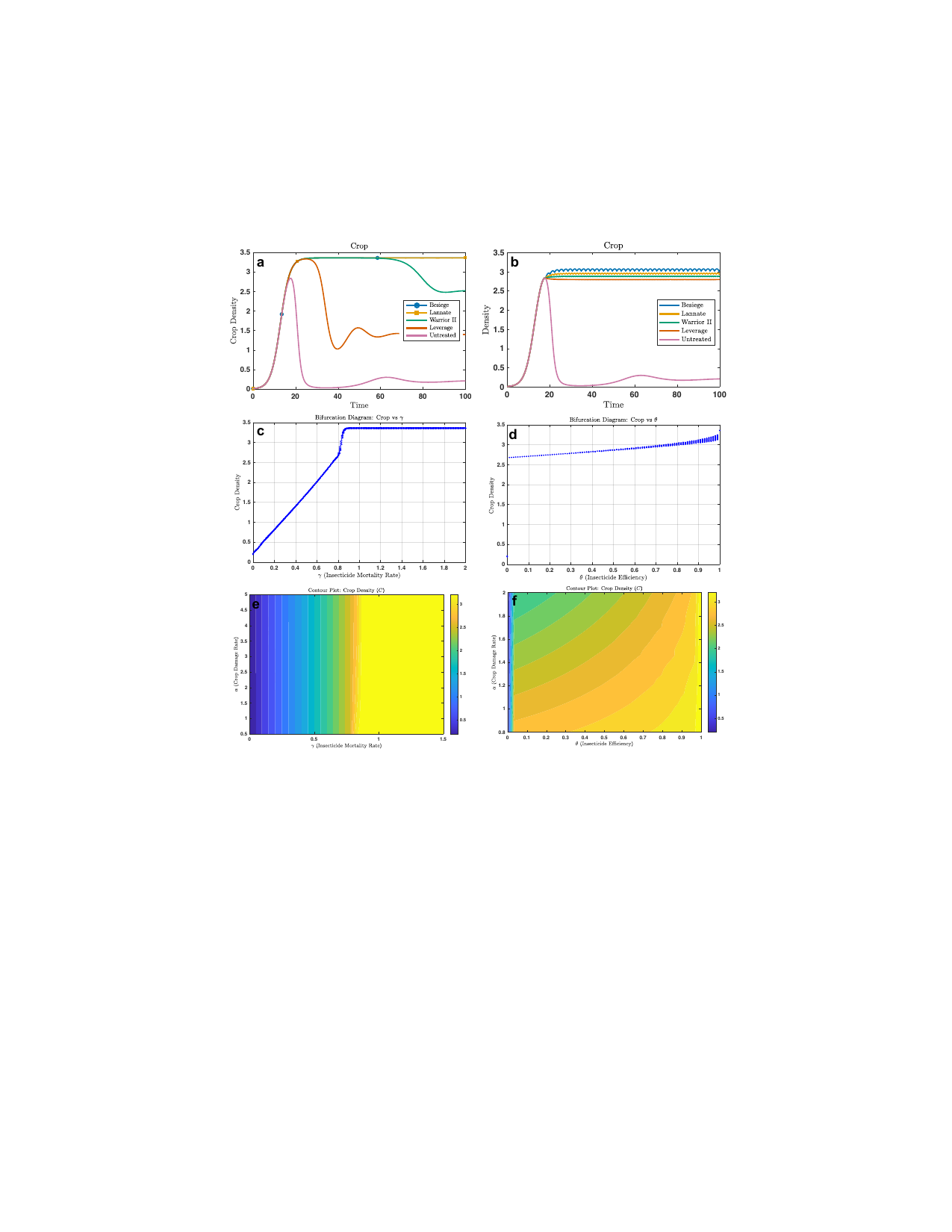}
\caption{\textbf{The long-term and bifurcation analysis of soybean-aphid system with Holling type III response with $k=1.2$.} \textbf{a} and \textbf{b}, Long-term dynamical behavior of the soybean-aphid crop-pest model, showcasing how the application of insecticides stabilizes soybean yield for both continuous and pulse control systems. \textbf{c}-\textbf{f}, One-dimensional and two-dimensional bifurcation analyses of the soybean-aphid system. The system parameters are listed in Table 1 and \eqref{tab:2}. Note: Bifurcation analysis is a mathematical technique used to study how the qualitative behavior of a dynamical system changes as parameters are varied, particularly focusing on transitions in stability \citep{perko2013differential})}
\label{fig:type3_long}
\end{figure}

\newpage

\begin{figure}[H]
\includegraphics[width=1\linewidth]{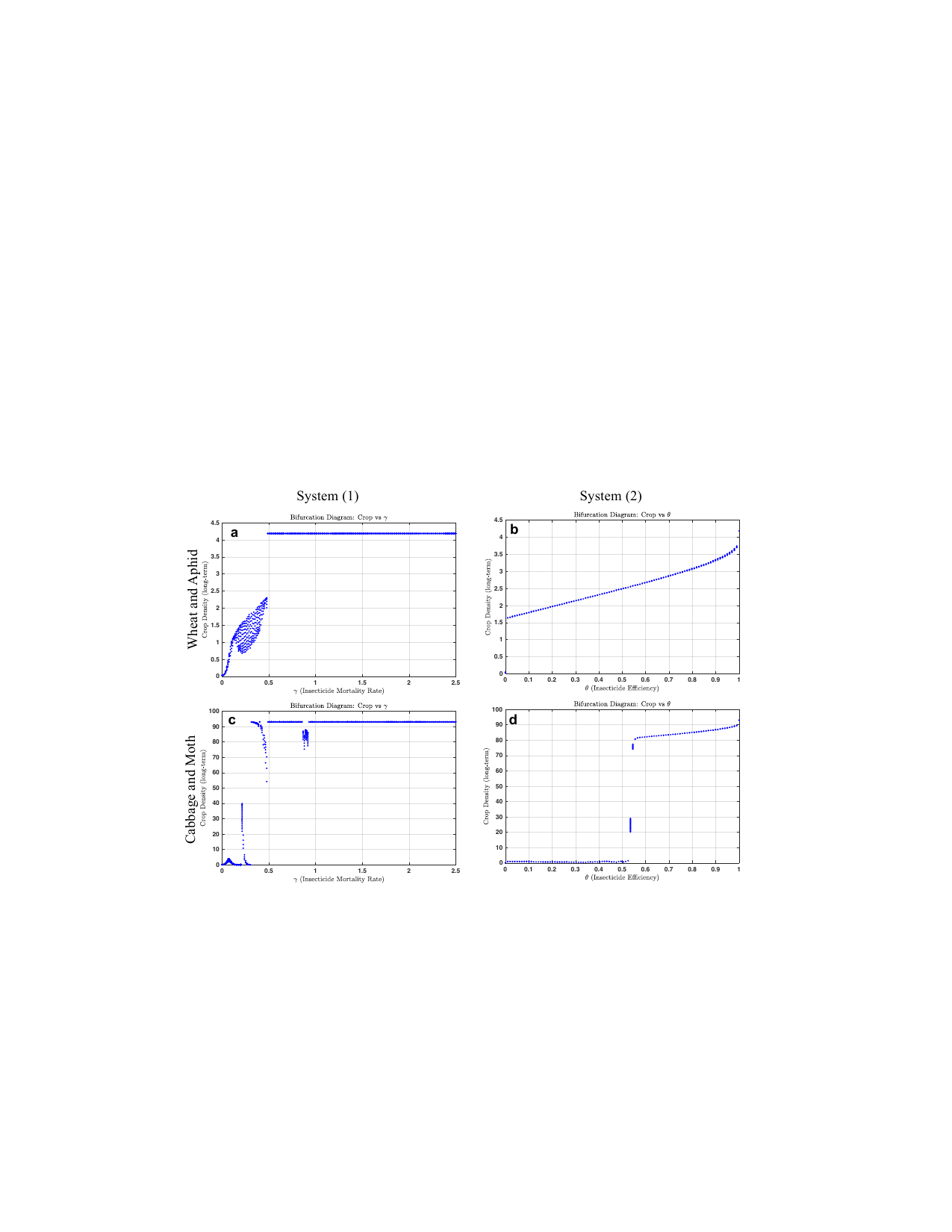}
\caption{\textbf{One-dimensional bifurcation analysis of the wheat-aphid and cabbage-moth system.} The one-dimensional bifurcation for crop yield for the wheat-aphid and cabbage-moth system. The other system parameters are listed in Table 1. (Note: Bifurcation analysis is a mathematical technique used to study how the qualitative behavior of a dynamical system changes as parameters are varied, particularly focusing on transitions in stability \citep{perko2013differential}).}
\label{fig:1d_crop}
\end{figure}

\newpage

\begin{figure}[H]
\includegraphics[width=1\linewidth]{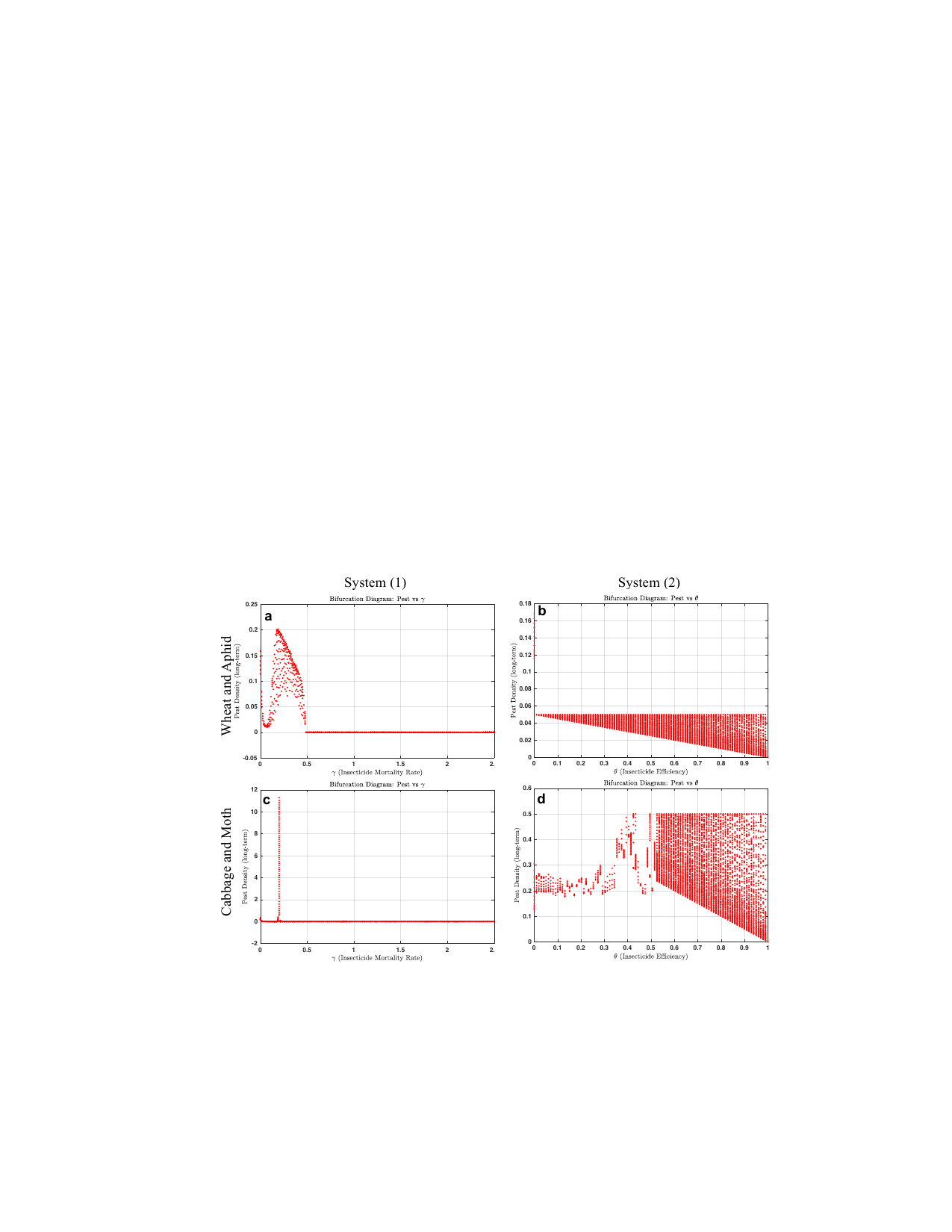}
\caption{\textbf{One-dimensional bifurcation analysis of the wheat-aphid and cabbage-moth system.} The one-dimensional bifurcation for pest population for the wheat-aphid and cabbage-moth system. The other system parameters are listed in Table 1. (Note: Bifurcation analysis is a mathematical technique used to study how the qualitative behavior of a dynamical system changes as parameters are varied, particularly focusing on transitions in stability \citep{perko2013differential}).}
\label{fig:1d_pest}
\end{figure}

\newpage

\begin{figure}[H]
\includegraphics[width=1\linewidth]{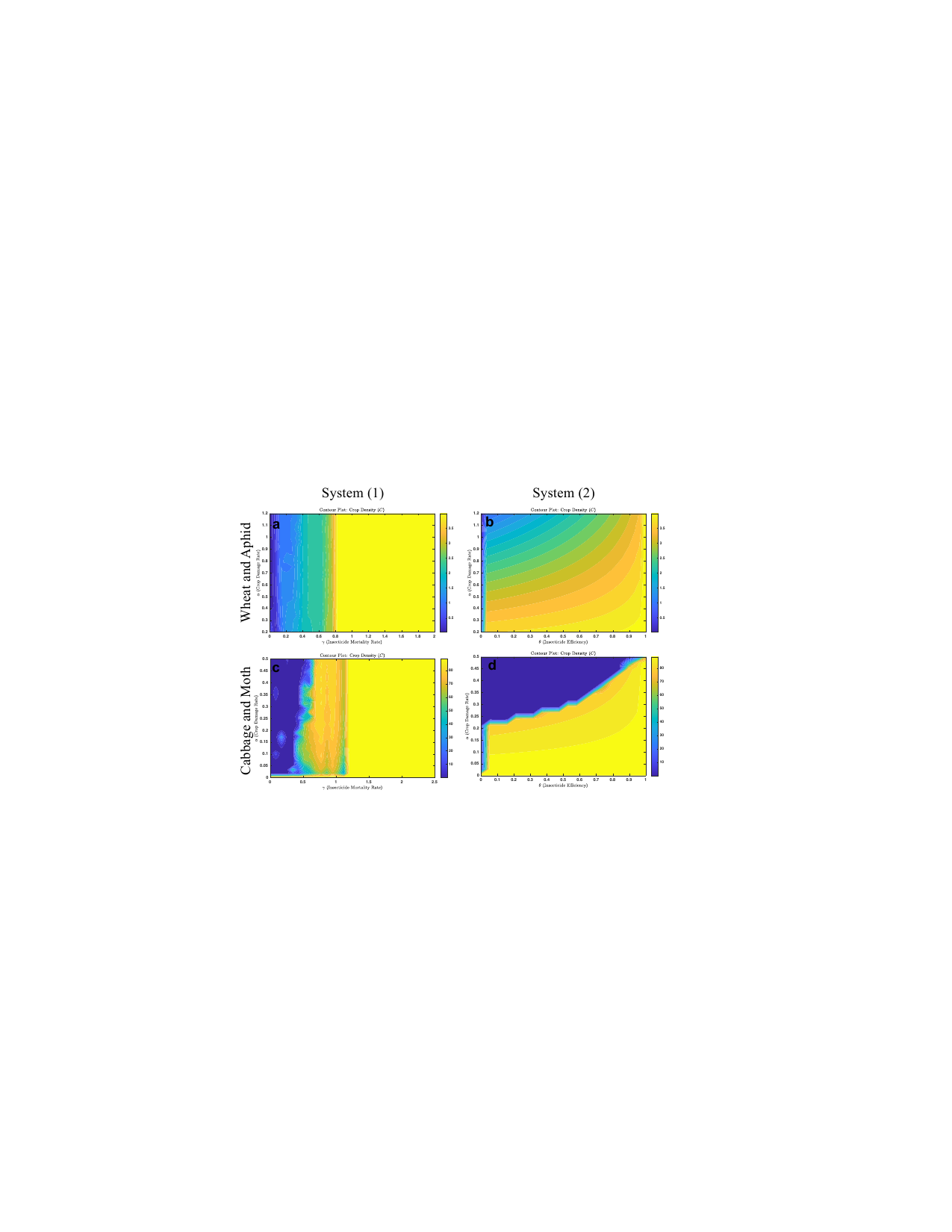}
\caption{\textbf{Two-dimensional bifurcation analysis of the wheat-aphid and cabbage-moth system.} The two-dimensional bifurcation for the wheat-aphid and cabbage-moth system. The other system parameters are listed in Table 1. (Note: Bifurcation analysis is a mathematical technique used to study how the qualitative behavior of a dynamical system changes as parameters are varied, particularly focusing on transitions in stability \citep{perko2013differential}).}
\label{fig:2d}
\end{figure}

\newpage

\begin{figure}[H]
\includegraphics[width=1\linewidth]{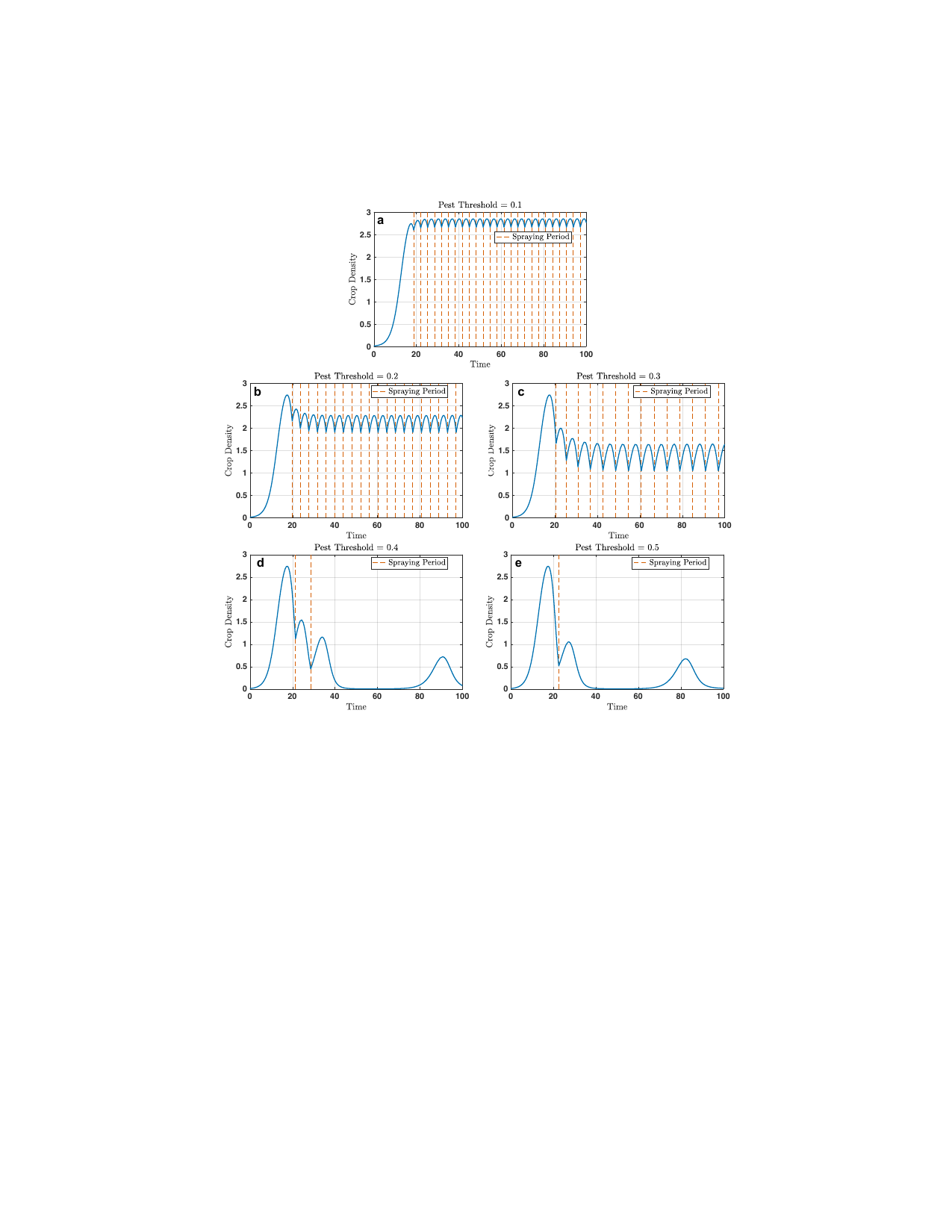}
\caption{\textbf{Soybean–aphid dynamics under threshold-based insecticide application (90\% mortality on spray day).} Simulation of soybean-aphid system dynamics under insecticide efficacy scenarios, assuming 90\% aphid mortality only on the day of spraying, with insecticide application triggered when the aphid population exceeds the threshold $y_{\text{thres}}$. The other system parameters are listed in Table 1.}
\label{fig:spray_one}
\end{figure}

\newpage

\begin{figure}[H]
\includegraphics[width=1\linewidth]{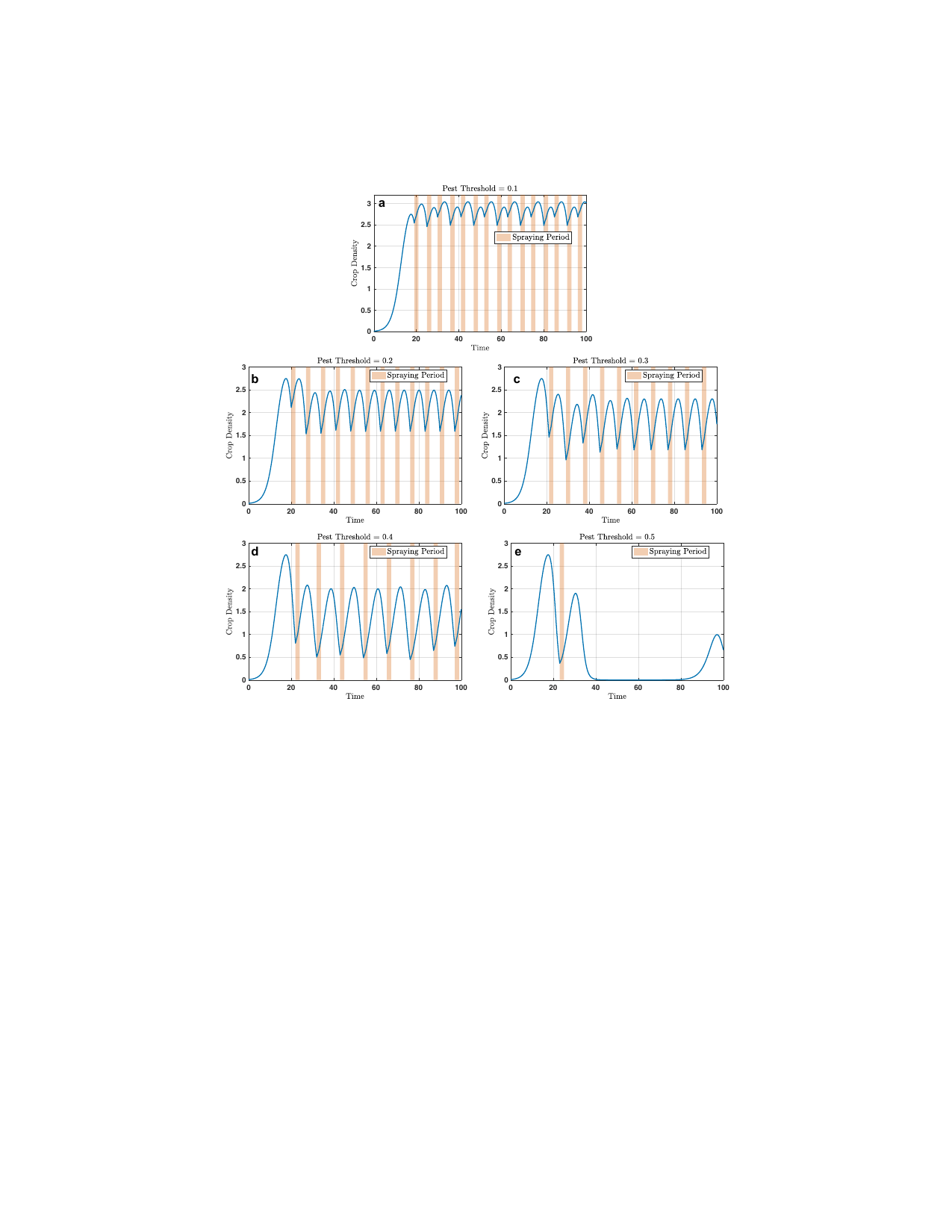}
\caption{\textbf{Soybean–aphid dynamics under threshold-based insecticide application (90\%  aphid mortality on the day of spraying and 70\% the following day).} Simulation of soybean and aphid population dynamics under insecticide efficacy scenarios, assuming 90\%   aphid mortality on the day of spraying and 70\% the following day, with insecticide application triggered when the aphid population exceeds the threshold $y_{\text{thres}}$. The other system parameters are listed in Table 1.}
\label{fig:spray_two}
\end{figure}

\newpage

\begin{figure}[H]
\includegraphics[width=1\linewidth]{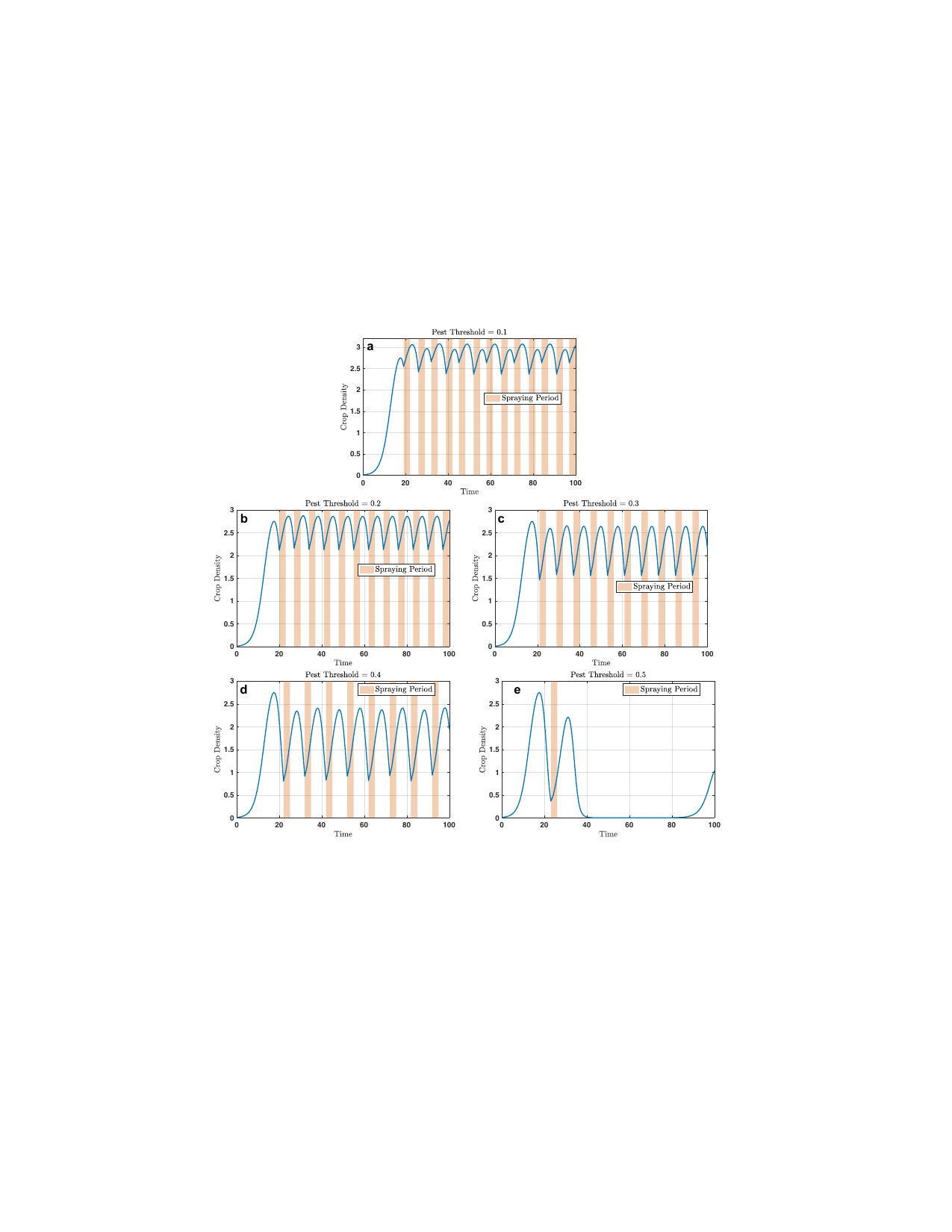}
\caption{\textbf{Soybean–aphid dynamics under threshold-based insecticide application (aphid mortality rates of 90\%, 80\%, and 50\% on three consecutive days).} Simulation of soybean and aphid population dynamics under insecticide efficacy scenarios, assuming aphid mortality rates of 90\%, 80\%, and 50\% on three consecutive days, with insecticide application triggered when the aphid population exceeds the threshold $y_{\text{thres}}$. The other system parameters are listed in Table 1.}
\label{fig:spray_three}
\end{figure}

\newpage

\end{document}